\theoremstyle{definition}
\newtheorem{dfn}{Definition}
\newtheorem{prop}[dfn]{Proposition}
\newcommand{\bhline}[1]{\noalign{\hrule height #1}}
\title{Graph Signal Denoising Using Regularization by Denoising and Its Parameter Estimation}
\author[1*]{Kojima,Hayate}
\author[1]{Higashi,Hiroshi}
\author[1]{Tanaka,Yuichi}
\affil[1]{Graduate School of Engineering, The University of Osaka, Japan}
\keywords{Graph signal processing, signal restoration, regularization by denoising, deep algorithm unrolling}
\begin{document}

\begin{abstract}
In this paper, we propose an interpretable denoising method for graph signals using regularization by denoising (RED). 
RED is a technique developed for image restoration that uses an efficient (and sometimes black-box) denoiser in the regularization term of the optimization problem. 
By using RED, optimization problems can be designed with the explicit use of the denoiser, and the gradient of the regularization term can be easily computed under mild conditions. 
We adapt RED for denoising of graph signals beyond image processing. 
We show that many graph signal denoisers, including graph neural networks, theoretically or practically satisfy the conditions for RED. 
We also study the effectiveness of RED from a graph filter perspective. 
Furthermore, we propose supervised and unsupervised parameter estimation methods based on deep algorithm unrolling. 
These methods aim to enhance the algorithm applicability, particularly in the unsupervised setting. 
Denoising experiments for synthetic and real-world datasets show that our proposed method improves signal denoising accuracy in mean squared error compared to existing graph signal denoising methods. 
\end{abstract}

\section{Introduction}
Graph signal processing (GSP) is a field of signal processing to analyze signals defined on graphs \cite{shumanEmergingFieldSignal2013,ortega2018Graph,tanaka2020Sampling}. 
By using GSP, we are able to efficiently analyze signals on networks, such as attributes on 3-D point clouds, sensor network data, and EEG, in which the relationship among data points is given by networks. 
GSP has attracted attention from various fields such as science, engineering, bioinformatics, and industry \cite{huang2018Graph,mutlu2012Signalprocessingbased}. 

Graph signals are often noisy due to measurement processes. 
Therefore, it is necessary to develop graph signal denoising techniques. 
Two approaches have mainly been proposed for graph signal denoising: One is model-based methods \cite{onuki2016Graph,pang2017Graph,chen2014Signal,ono2015Total}
and the other is data-driven ones \cite{kipf2017SemiSupervised,velickovic2018Graph,rey2022Untrained}. 

Model-based methods include filtering in the graph frequency domain \cite{onuki2016Graph} and (convex) optimization using total variation and/or graph Laplacian quadratic form \cite{pang2017Graph,chen2014Signal,ono2015Total}. 
While these methods are highly interpretable because of explicit formulations, they need to appropriately design an objective function, e.g., regularization, from the pre-determined signal prior. 

Data-driven methods typically use graph neural networks (GNNs) \cite{kipf2017SemiSupervised,velickovic2018Graph}. 
Graph convolution neural networks (GCNNs) \cite{kipf2017SemiSupervised} and graph attention networks (GATs) \cite{velickovic2018Graph} are representative GNN methods. 
They present promising performance for high-level tasks when they are sufficiently trained. 
In practice, however, such supervised learning requires a large amount of training data.
While several unsupervised learning \cite{rey2022Untrained} methods have been proposed, they require appropriate prior(s) to compensate for the limited training samples, similar to model-based approaches. 

To take a trade-off (or compromise) between these two approaches, combinations of model-based and data-driven methods have been proposed.
Typically, they start from a model-based objective function and learnable building blocks are included in its iterative optimization steps.
A widely used approach in this category for GSP is plug-and-play alternating direction method of multipliers (PnP-ADMM) \cite{chan2019Performance,yazaki2019Interpolation}, as an extension of PnP-ADMM in image processing to the graph setting.
It uses a (black-box) denoiser in its ADMM \cite{boyd2010Distributed} iteration.
While the obtained solution may be sub-optimal, its practical performance overcomes model- and data-driven approaches. 
However, due to the plugged-in denoiser, its overall formulation as an objective function is not always explicit which reduces the interpretability of the algorithm. 
It also requires many iterations to converge.

In this paper, we propose another method of a combination of model-based and data-driven approaches for graph signal denoising using regularization by denoising (RED) \cite{doi:10.1137/16M1102884}. 
RED has been utilized in image processing \cite{doi:10.1137/16M1102884, sun2021Dynamic, cohen2021Regularization, iskender2023REDPSM}.
In contrast to the PnP-ADMM, RED can explicitly include a black-box denoiser in its objective function (not in the internal algorithm) that makes the entire function interpretable. 
If the denoiser satisfies mild conditions, the gradient of the optimization problem can be represented in a simple form, which results in convergence to the global optimum solution. 
Note that it has not yet been studied whether high-performance graph signal denoisers, including GNNs, can be applicable for RED.

Our method is the first attempt to apply RED for GSP beyond image processing. 
We show that many graph signal denoising algorithms theoretically or practically satisfy the conditions for RED. 
Additionally, we propose both supervised and unsupervised parameter estimation methods based on deep algorithm unrolling \cite{monga2021Algorithm} and Noise2Noise \cite{lehtinen2018Noise2Noise}. 
We also reveal the effectiveness of RED from a graph filter perspective.

Experiments of graph signal denoising using synthetic and real-world datasets show that the proposed method overcomes existing model-based and data-driven methods as well as a combination of them.

The rest of the paper is organized as follows. 
Section \ref{sec:preliminaries} provides the preliminary of this paper, covering graph signal processing, RED, deep algorithm unrolling, and Noise2Noise. 
Section \ref{sec:graph-denoising} introduce popular graph signal denoising techniques. 
In Section \ref{sec:proposed}, we explain the proposed graph signal denoising method based on RED. 
Section \ref{sec:experiments} shows the efficacy of our proposed method through experiments on synthetic and real-world datasets. 
Finally, the conclusions of this paper are presented in Section \ref{sec:conclusion}.

\noindent\textit{Notation}:
Vectors are denoted by bold lowercase such as $\mathbf{x}$ and matrices by bold uppercase such as $\mathbf{X}$. The $i$-th component of the vector $\mathbf{x}$ is denoted by $x_i$ and the $(i,j)$ component of the matrix $\mathbf{X}$ is denoted by $X_{i,j}$. 
The element-wise product (Hadamard product) of two vectors is denoted by $\circ$.

\section{Preliminaries}
\label{sec:preliminaries}
In this section, we introduce fundamentals of GSP and RED. 
In addition, we explain parameter training methods used in this paper, deep algorithm unrolling \cite{monga2021Algorithm} and Noise2Noise \cite{lehtinen2018Noise2Noise}.  

\subsection{Graph Signal Processing}
A graph $\mathcal{G} = (\mathcal{V},\mathcal{E},\mathbf{W})$ is characterized by the node set $\mathcal{V} = \{v_1,v_2,\dots,v_N\}$, the edge set $\mathcal{E}$, and the weighted adjacency matrix $\mathbf{W} \in \mathbb{R}_{\geq 0}^{N \times N}$. 
The $(m,n)$ element of $\mathbf{W}$ is $W_{m,n}>0$ if $v_m$ and $v_n$ are connected by an edge. 
The number of nodes is given by $N=|\mathcal{V}|$. 
When the graph is undirected, graph Laplacian is defined by $\mathbf{L} = \mathbf{\Delta} - \mathbf{W}$ where $\mathbf{\Delta}$ is a diagonal matrix called a degree matrix whose elements are defined as $\Delta_{m,m} = \sum_n W_{m,n}$. 
Since $\mathbf{L}$ is a real symmetric matrix, it can be represented as $\mathbf{L} = \mathbf{U}\mathbf{\Lambda}\mathbf{U}^\top$ by eigendecomposition where $\mathbf{U} = [\mathbf{u}_1,\mathbf{u}_2,\dots,\mathbf{u}_N]\in \mathbb{R}^{N\times N}$ is an orthonormal matrix and $\mathbf{\Lambda} = \operatorname{diag}(\lambda_1,\lambda_2,\dots,\lambda_N)$ is a diagonal matrix of the corresponding eigenvalues. 

A graph signal $\mathbf{x}\in \mathbb{R}^N$ is a discrete signal defined on $\mathcal{V}$ (i.e., $x_n$ is located at node $v_n$). 
The graph Fourier transform (GFT) is defined as $\hat{\mathbf{x}} = \mathbf{U}^\top\mathbf{x}$ and inverse GFT (IGFT) is defined as $\mathbf{x} = \mathbf{U}\hat{\mathbf{x}}$.
The Laplacian quadratic form is defined by using the graph signal and the graph Laplacian as 
\begin{align}
    S_2(\mathbf{x}) &= \mathbf{x}^\top\mathbf{L}\mathbf{x}\\
    &=\sum_{(n,m)\in \mathcal{E}} W_{n,m}(x_n - x_m)^2,
    \label{eq:s2}
\end{align}
where $\mathcal{E}$ is a set of connected nodes. 
It is often used for a smoothness measure of graph signals.

Graph signals may contain noise like regular time-domain signals. The observed graph signal $\mathbf{y}$ can be modeled as
\begin{equation}
    \mathbf{y} = \mathbf{x}^* + \mathbf{n},
    \label{eq:observation_model}
\end{equation}
where $\mathbf{y} \in \mathbb{R}^N$ is the observed graph signal, $\mathbf{x}^*\in \mathbb{R}^N$ is the (unknown) original signal, $\mathbf{n}\in \mathbb{R}^N$ is additive white Gaussian noise.

In this paper, we consider a typical denoising problem: Obtaining a denoised signal $\mathbf{x}$ as an estimate of $\mathbf{x}^*$ from $\mathbf{y}$. 

\subsection{Regularization by Denoising (RED)}
\label{sec:RED}
RED \cite{doi:10.1137/16M1102884} is an image restoration method that uses image denoising algorithms for its regularizer.
Let us consider the observation model in \eqref{eq:observation_model} but $\mathbf{x}^*$ and $\mathbf{y}$ are now vectorized image signals.
RED considers the following optimization problem:
\begin{equation}
    \mathbf{x} = \underset{\tilde{\mathbf{x}}}{\operatorname{argmin}}~ \frac{1}{2} \|\tilde{\mathbf{x}} - \mathbf{y}\|^2_2 + \frac{\alpha_{\text{red}}}{2} \tilde{\mathbf{x}}^\top(\tilde{\mathbf{x}} - \mathcal{D}_{\text{image}}(\tilde{\mathbf{x}}))
    \label{eq:RED}
\end{equation}
where $\alpha_{\text{red}}$ is a regularization parameter and $\mathcal{D}_{\text{image}}(\cdot)$ is an image denoiser.

While this optimization problem is typically solved by iterative optimization algorithm such as gradient descent or ADMM, it is necessary to compute the gradient of \eqref{eq:RED}. 
In \cite{doi:10.1137/16M1102884}, it has been shown that the gradient of the regularization term can be rewritten as
\begin{equation}
    \nabla \left(\frac{1}{2}\tilde{\mathbf{x}}^\top(\tilde{\mathbf{x}} - \mathcal{D}_{\text{image}}(\tilde{\mathbf{x}}))\right) =  \tilde{\mathbf{x}} - \mathcal{D}_{\text{image}}(\tilde{\mathbf{x}})
    \label{eq:RED-grad}
\end{equation}
when $\mathcal{D}_{\text{image}}(\cdot)$ satisfies the following two conditions:
\begin{itemize}
    \item \textbf{(Local) Homogeneity:} $\mathcal{D}_{\text{image}}(c\cdot\tilde{\mathbf{x}}) = c\cdot\mathcal{D}_{\text{image}}(\tilde{\mathbf{x}})$ around $c = 1$,
    \item \textbf{Strong Passivity:}
    $\eta(\nabla \mathcal{D}_{\text{image}}(\tilde{\mathbf{x}}))\leq 1$,
\end{itemize}
where $\eta(\cdot)$ denotes the spectral radius.
In other words, if the above conditions are satisfied, the minimization problem can be solved using an iterative algorithm without computing the gradient of the inner denoiser. 
In gradient descent, for example, the update step of the algorithm is written as follows:
\begin{equation}
    \tilde{\mathbf{x}}^{(k+1)} = \tilde{\mathbf{x}}^{(k)} - \epsilon \left(\tilde{\mathbf{x}}^{(k)} - \mathbf{y} + \alpha_\text{red} \left(\tilde{\mathbf{x}}^{(k)} - \mathcal{D}_{\text{image}}(\tilde{\mathbf{x}}^{(k)})\right)\right),\label{eq:red-update}
\end{equation}
where $\epsilon$ is the step size. 
It has been shown that many high-performance \textit{image} denoising methods satisfy the above conditions (at least, practically).

\subsection{Deep Algorithm Unrolling}
Deep algorithm unrolling (DAU) is a family of methods for training parameters in iterative optimization algorithms using deep learning techniques \cite{li2020Efficient}. 
Roughly speaking, DAU \textit{unrolls} the iterative algorithm and the parameters in the unrolled iterations are learned using backpropagation from training data. 
Although the unrolled steps with the learned parameters may not be guaranteed to converge to the optimal solution in general, practical improvements have been reported in terms of convergence speed and accuracy compared to conventional methods \cite{li2020Efficient,shi2022Algorithm}.
For more details, please refer to \cite{monga2021Algorithm}.

\subsection{Noise2Noise}
Noise2Noise \cite{lehtinen2018Noise2Noise} is a deep learning-based image denoising model. It is designed to train parameters in neural networks under the unsupervised setting.

First, let us consider training for the supervised setting.
When using the mean squared error (MSE) as the loss function, parameters in a neural network are learned to minimize the following cost function.
\begin{equation}
    \boldsymbol{\theta}_{\text{N2C}} = \underset{\boldsymbol{\theta}}{\arg\min}~\mathbb{E}\left[\|f_{\boldsymbol{\theta}}(\mathbf{y}) - \mathbf{x}^*\|_2^2\right],
\end{equation}
where $\boldsymbol{\theta}$ is a parameter vector and $f_{\boldsymbol{\theta}}(\cdot)$ is the output under $\boldsymbol{\theta}$. 
This clearly requires $\mathbf{x}^*$: The ground-truth clean images.
However, this is often not the case.

In Noise2Noise, it makes pseudo pairs of images by intentionally adding noise to the observed noisy signals.
Hence, the to-be-denoised signal $\mathbf{y}_{\text{noisy}}$ is given by
\begin{equation}
    \mathbf{y}_{\text{noisy}} = \mathbf{y} + \mathbf{n}_{\text{noisy}},
    \label{eq:n2n}
\end{equation}
where $\mathbf{n}_{\text{noisy}}$ is additive white Gaussian noise conforming to $\mathcal{N}(0, \sigma_{\text{N2N}})$. 
Note that $\mathbf{y}$ is already noisy.

By using $\mathbf{y}$ and $\mathbf{y}_{\text{noisy}}$ and assuming $\mathbf{y} \sim \mathbf{x}$, \eqref{eq:n2n} is modified as follows:
\begin{equation}
    \boldsymbol{\theta}_{\text{N2N}} = \underset{\boldsymbol{\theta}}{\arg\min}~\mathbb{E}\left[\|f_{\boldsymbol{\theta}}(\mathbf{y}_{\text{noisy}}) - \mathbf{y}\|_2^2\right]. \label{eq:noise2noise}
\end{equation}
Finally, $f_{\boldsymbol{\boldsymbol{\theta}_{\text{N2N}}}}(\cdot)$ is used for denoising of $\mathbf{y}$.

\section{Graph Signal Denoising}
\label{sec:graph-denoising}
In this section, we introduce existing popular denoising algorithms for graph signals.

\subsection{Graph Laplacian Regularization}
\label{sec:LR}
Let us consider the observation model in \eqref{eq:observation_model}.
Graph Laplacian regularization \cite{pang2017Graph} is a well-known model-based graph signal denoising method. It is based on the following minimization problem:
\begin{equation}
    \mathbf{x} = \underset{\tilde{\mathbf{x}}}{\operatorname{argmin}}~ \frac{1}{2}\|\tilde{\mathbf{x}}-\mathbf{y}\|^2_2 + \frac{\alpha_{\text{lr}}}{2}S_2(\tilde{\mathbf{x}})
    \label{eq:lr}
\end{equation}
where $\alpha_{\text{lr}}$ is a regularization parameter.
The second term corresponds to regularization using signal smoothness with the Laplacian quadratic form shown in \eqref{eq:s2}. 
Its solution is given by the following closed-form:
\begin{align}
    \mathbf{x} &= (\mathbf{I} + \alpha_\text{lr}\mathbf{L})^{-1}\mathbf{y} := \mathcal{D}_{\text{LR}}(\mathbf{y}).
    \label{eq:lr-closed}
\end{align}

Since the graph Laplacian often becomes a large matrix, the computational cost for the inverse matrix in \eqref{eq:lr-closed} is also large. 
To reduce the computational cost, approximate solutions of the inverse can be obtained by polynomial approximation, conjugate gradient, and ADMM \cite{boyd2010Distributed}.

\subsection{PnP-ADMM}
Graph signal denoising using PnP-ADMM \cite{yazaki2019Interpolation} is based on the following optimization problem:
\begin{align}
    \underset{\tilde{\mathbf{x}},\mathbf{v}}{\operatorname{min}}~ \frac{1}{2}\|\tilde{\mathbf{x}}-\mathbf{y}\|^2_2 + \alpha_{\text{pnp}} s(\mathbf{\mathbf{v}}) 
    ~~\text{s.t.}~ \tilde{\mathbf{x}} = \mathbf{v} \label{eq:pnp-admm}
\end{align}
where $\alpha_{\text{pnp}}$ is a regularization parameter and $s(\cdot)$ is an implicit regularization function.
When ignoring the convexity of \eqref{eq:pnp-admm}, the following ADMM iteration can be applied:
\begin{empheq}[left={\empheqlfloor}]{align}
  \mathbf{x}^{(k+1)} &= \frac{1}{1+\rho}(\mathbf{x}^{(k)} + \rho(\mathbf{v}-\mathbf{u}))\nonumber\\
  \mathbf{v}^{(k+1)} &= \mathcal{D}_{\text{graph}}(\mathbf{x}^{(k+1)}+\mathbf{u}^{(k)}) \label{eq:admm-update} \\
  \mathbf{u}^{(k+1)} &= \mathbf{u}^{(k)} + (\mathbf{x}^{(k+1)} - \mathbf{v}^{(k+1)}), \nonumber
\end{empheq}
where $k$ is the iteration number, $\mathcal{D}_{\text{graph}}(\cdot)$ is a black-box graph signal denoiser implicitly related to $s(\mathbf{\mathbf{v}})$, $\rho$ is the stepsize, and $\mathbf{v}$ and $\mathbf{u}$ are auxiliary variables. 

In practice, PnP-ADMM outperforms the model- and data-driven approaches for graph signal denoising \cite{yazaki2019Interpolation,cai2025Unrolling}.
However, its overall optimization problem is not always explicit since $s(\cdot)$ cannot be derived from $\mathcal{D}_{\text{graph}}(\cdot)$ straightforwardly.

\section{RED-based Graph Signal Denoising}
\label{sec:proposed}
In this section, we introduce our proposed method for denoising graph signals using RED.
The denoised graph signal is obtained by replacing the image denoiser $\mathcal{D}_\text{image}$ in \eqref{eq:RED} with a graph signal denoiser $\mathcal{D}_\text{graph}$ as
\begin{equation}
    \mathbf{x} = \underset{\tilde{\mathbf{x}}}{\operatorname{argmin}}~ \frac{1}{2} \|\tilde{\mathbf{x}} - \mathbf{y}\|^2_2 + \frac{\alpha_{\text{red}}}{2} \tilde{\mathbf{x}}^\top(\tilde{\mathbf{x}} - \mathcal{D}_{\text{graph}}(\tilde{\mathbf{x}})).
    \label{eq:proposed}
\end{equation}
While the objective function is similar to \eqref{eq:RED} except for that $\mathbf{x}$ and $\mathbf{y}$ are graph signals and $\mathcal{D}_{\text{image}}$ is replaced with a graph signal denoiser $\mathcal{D}_{\text{graph}}$, the applicability of RED to the graph setting is not trivial.

In this section, we first show that widely-used graph signal denoisers are applicable to RED. 
Then, we propose supervised and unsupervised parameter training methods using DAU and Noise2Noise. 
Finally, the effectiveness of RED is revealed from a graph filter perspective.

\subsection{Are Graph Signal Denoisers Applicable to RED?}
\label{sec:RED2GSP}
For the regularizer of RED, the two conditions shown in Section \ref{sec:RED} must be satisfied.
Here, we show that the graph Laplacian regularization (LR) \cite{pang2017Graph}, GAT \cite{velickovic2018Graph}, and even PnP-ADMM \cite{yazaki2019Interpolation} satisfy these conditions.

\subsubsection{LR}
\noindent\textbf{(Local) Homogeneity}:
The solution with LR in \eqref{eq:lr-closed} can be viewed as a graph lowpass filtering by $h(\mathbf{L}) = (\mathbf{I} + \alpha_{\text{lr}}\mathbf{L})^{-1}$ applying to the observed signal $\mathbf{y}$. 
If $\mathbf{L}$ is determined independently of the graph signal, $h(\mathbf{L})(c\cdot\mathbf{x}) = c\cdot h(\mathbf{L})\mathbf{x}$ is always satisfied.

We consider another cases that the underlying graph is estimated directly from the graph signal.
In this paper, we focus on the following two representative situations: 
\begin{itemize}
    \item The edge weights are inversely proportional to the distance between signal values.
    \item The graph Laplacian is estimated using a Gaussian Markov Random Field (GMRF).
\end{itemize}
Below, proofs of homogeneity for these cases are shown.

\begin{prop}
    Suppose that the adjacency matrix $\mathbf{W}$ is constructed with edge weights $W_{i,j}$ for any connected nodes $v_i$ and $v_j$ as:
    \begin{equation}
        W_{i,j} = 1/\sqrt{\| y_i - y_j \|^2}.
        \label{eq:calc-W}
    \end{equation}
    If $\mathbf{W}$ is normalized, the Laplacian regularization denoiser $\mathcal{D}_{\text{LR}}(\cdot)$ defined in \eqref{eq:lr-closed} satisfies $\mathcal{D}_\text{LR}(c\cdot\mathbf{y}) = c \cdot \mathcal{D}_\text{LR}(\mathbf{y})$.
\end{prop}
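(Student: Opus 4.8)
The plan is to track how the graph Laplacian $\mathbf{L}$ appearing in \eqref{eq:lr-closed} changes under the rescaling $\mathbf{y}\mapsto c\mathbf{y}$, and to show that, after normalization, $\mathbf{L}$ does not change at all; the claim then follows from the linearity of the filter $(\mathbf{I}+\alpha_{\text{lr}}\mathbf{L})^{-1}$. First I would substitute $c\mathbf{y}$ into \eqref{eq:calc-W}: for every connected pair $(v_i,v_j)$,
\[
  W_{i,j}(c\mathbf{y}) = \frac{1}{\sqrt{\|c y_i - c y_j\|^2}} = \frac{1}{|c|}\,W_{i,j}(\mathbf{y}),
\]
so the raw adjacency matrix scales as $\mathbf{W}(c\mathbf{y}) = |c|^{-1}\mathbf{W}(\mathbf{y})$. (This is well defined because connected nodes are assumed to carry distinct values; and since homogeneity is only needed for $c$ near $1$, we may take $c>0$ and write $|c|=c$.)

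Next I would use the hypothesis that $\mathbf{W}$ is normalized. Writing $\mathbf{\Delta}$ for the degree matrix of $\mathbf{W}(\mathbf{y})$, the degree matrix of $\mathbf{W}(c\mathbf{y})$ is $|c|^{-1}\mathbf{\Delta}$, so in a degree normalization the common scalar cancels; for instance the symmetrically normalized adjacency satisfies
\[
  (|c|^{-1}\mathbf{\Delta})^{-1/2}\,(|c|^{-1}\mathbf{W})\,(|c|^{-1}\mathbf{\Delta})^{-1/2} = \mathbf{\Delta}^{-1/2}\mathbf{W}\mathbf{\Delta}^{-1/2},
\]
and likewise for the random-walk normalization $\mathbf{\Delta}^{-1}\mathbf{W}$. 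Hence the normalized adjacency matrix — and therefore the normalized Laplacian $\mathbf{L}$ used inside $\mathcal{D}_{\text{LR}}$ — is identical for $\mathbf{y}$ and $c\mathbf{y}$. With $\mathbf{L}$ thus fixed, $\mathcal{D}_{\text{LR}}(c\mathbf{y}) = (\mathbf{I}+\alpha_{\text{lr}}\mathbf{L})^{-1}(c\mathbf{y}) = c\,(\mathbf{I}+\alpha_{\text{lr}}\mathbf{L})^{-1}\mathbf{y} = c\,\mathcal{D}_{\text{LR}}(\mathbf{y})$, which is the asserted (local) homogeneity — in fact it holds for every $c\neq 0$.

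The main obstacle is not computational but definitional: the whole argument rests on reading ``$\mathbf{W}$ is normalized'' as a degree-based normalization that is invariant under multiplication of $\mathbf{W}$ by a positive scalar, such as $\mathbf{\Delta}^{-1/2}\mathbf{W}\mathbf{\Delta}^{-1/2}$ or $\mathbf{\Delta}^{-1}\mathbf{W}$. I would therefore make that choice explicit in the statement of the proof, since a normalization that does not quotient out the overall scale of $\mathbf{W}$ would not give the cancellation. A secondary point worth one sentence is the exclusion of $c=0$, where the weights in \eqref{eq:calc-W} are undefined; this is harmless because RED, as recalled in Section \ref{sec:RED}, only requires homogeneity in a neighborhood of $c=1$.
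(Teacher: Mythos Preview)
Your argument is correct and follows the same route as the paper: scale $\mathbf{y}\mapsto c\mathbf{y}$, observe from \eqref{eq:calc-W} that every raw weight picks up a common factor $1/c$, and then let the normalization of $\mathbf{W}$ cancel that factor so that $\mathbf{L}$ is unchanged and the linear filter $(\mathbf{I}+\alpha_{\text{lr}}\mathbf{L})^{-1}$ gives homogeneity. In fact you spell out explicitly (the degree-matrix cancellation, the linearity step, the caveat on what ``normalized'' must mean) what the paper compresses into the single sentence ``This immediately results in local homogeneity.''
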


\begin{proof}
    Let us denote the scaled version of $\mathbf{W}$ as $\mathbf{W}'$.
    The element in $\mathbf{W}'$ can then be written as
    \begin{equation}
        W'_{i,j} = 1/ \left(c \sqrt{\|y_i - y_j\|^2}\right).
        \label{eq:c-multiplied-W}
    \end{equation}
    This immediately results in local homogeneity.
\end{proof}

\begin{prop}
    Suppose that the original signal $\mathbf{x}^*$ is modeled as a Gaussian Markov random field (GMRF), whose probability density function (PDF) $p(\cdot)$ is given by \cite{lawrence2012Unifying}:
    \begin{align}
    p(\mathbf{y}|\boldsymbol{\mu},\mathbf{Q}) &= \frac{(\operatorname{det}\mathbf{Q})^{1/2}}{(2\pi)^{N/2}}\exp\left(-\frac{1}{2}(\mathbf{y}-\boldsymbol{\mu})^\top\mathbf{Q}(\mathbf{y}-\boldsymbol{\mu})\right), \label{eq:gmrf-a}\\
    \mathbf{Q} &= \mathbf{L}+\delta\mathbf{I},\label{eq:gmrf-q}
    \end{align}
    where $\boldsymbol{\mu}$ is the mean vector of $\mathbf{y}$, $\mathbf{Q}$ is the inverse covariance matrix, i.e., precision matrix, and $\delta >0$ is a parameter. 
    If $\mathbf{L}$ is normalized, the Laplacian regularization denoiser $\mathcal{D}_{\text{LR}}(\cdot)$ in this case satisfies $\mathcal{D}_\text{LR}(c\cdot\mathbf{y}) = c \cdot \mathcal{D}_\text{LR}(\mathbf{y})$.
\end{prop}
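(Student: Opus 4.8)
The plan is to show that the graph Laplacian $\mathbf{L}$ estimated from $\mathbf{y}$ via the GMRF model is \emph{scale-invariant} under $\mathbf{y}\mapsto c\mathbf{y}$, just as in the previous proposition, and then to invoke the same lowpass-filtering argument used for the data-independent case in Section~\ref{sec:RED2GSP}. Concretely, $\mathcal{D}_\text{LR}(\mathbf{y}) = (\mathbf{I}+\alpha_\text{lr}\mathbf{L}(\mathbf{y}))^{-1}\mathbf{y}$, so if we can argue $\mathbf{L}(c\mathbf{y}) = \mathbf{L}(\mathbf{y})$ after normalization, then $\mathcal{D}_\text{LR}(c\mathbf{y}) = (\mathbf{I}+\alpha_\text{lr}\mathbf{L}(\mathbf{y}))^{-1}(c\mathbf{y}) = c\,\mathcal{D}_\text{LR}(\mathbf{y})$ follows immediately by linearity of the matrix inverse in its argument.

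First I would make precise how $\mathbf{L}$ is obtained from the GMRF: maximizing the likelihood \eqref{eq:gmrf-a} (equivalently, a maximum-likelihood / MAP estimate of the precision matrix $\mathbf{Q}$ subject to the Laplacian-plus-$\delta\mathbf{I}$ structure \eqref{eq:gmrf-q}) yields an estimated precision matrix that depends on the data only through the empirical second-moment / covariance statistic built from $\mathbf{y}-\boldsymbol{\mu}$. Next I would track the effect of the scaling $\mathbf{y}\mapsto c\mathbf{y}$ (with $\boldsymbol{\mu}\mapsto c\boldsymbol{\mu}$): the empirical covariance scales by $c^2$, hence the unnormalized precision estimate scales by $c^{-2}$, and correspondingly the unnormalized Laplacian picks up a global factor $c^{-2}$ (the $\delta\mathbf{I}$ term being handled either by absorbing it or by noting it does not affect the off-diagonal sparsity/weight pattern). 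Then the normalization step — dividing $\mathbf{W}$ (or $\mathbf{L}$) by its appropriate norm, exactly as in Proposition preceding — cancels this scalar $c^{-2}$, so the normalized $\mathbf{L}$ is identical for $\mathbf{y}$ and $c\mathbf{y}$. Finally I would conclude local homogeneity of $\mathcal{D}_\text{LR}$ as above.

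The main obstacle I anticipate is pinning down the precise dependence of the estimated $\mathbf{L}$ on $\mathbf{y}$ so that the ``scales by $c^{-2}$'' claim is actually correct: the GMRF precision estimation is not simply ``read off'' the data like the reciprocal-distance weights of the previous proposition, and the additive $\delta\mathbf{I}$ in \eqref{eq:gmrf-q} breaks exact homogeneity of $\mathbf{Q}$ unless one is careful (e.g., $\delta$ must also be understood to rescale, or the normalization must be applied to $\mathbf{L}$ alone rather than $\mathbf{Q}$). I would resolve this by being explicit that the relevant estimator is essentially the (regularized) inverse-covariance solve, that $\delta$ plays the role of a fixed user parameter applied \emph{after} estimating the weight structure, and that ``normalized'' means the same normalization convention as in the previous proposition. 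If the estimator instead only scales homogeneously in an approximate or asymptotic sense, the honest statement is ``(local) homogeneity holds around $c=1$,'' which is exactly the condition RED requires, so even a first-order argument near $c=1$ suffices.
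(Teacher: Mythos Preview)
Your proposal is correct and takes essentially the same approach as the paper: both arguments establish that the precision matrix associated to $c\mathbf{y}$ is $\mathbf{Q}/c^{2}$ (the paper derives this via the change-of-variables formula for the density, you via the equivalent observation that the empirical covariance scales by $c^{2}$), then invoke normalization of $\mathbf{L}$ to cancel the $c^{-2}$ factor and conclude homogeneity by linearity of $h(\mathbf{L})$. Your explicit flagging of the $\delta\mathbf{I}$ term is apt---the paper handles it in exactly the way you suggest, by applying the normalization to $\mathbf{L}$ alone so that the scalar factor on $\mathbf{Q}$ is absorbed.
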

\begin{proof}
    We examine how the graph Laplacian $\mathbf{L}$ changes when the signal is scaled. This is done by comparing the GMRF model for a signal $\mathbf{y}$ with the model for its scaled version, $\mathbf{y}' = c\mathbf{y}$. 
    The probability density function for $\mathbf{y}'$ can be written using the change of variables formula as \cite{bishop2006Pattern}:
    \begin{align}
        p(\mathbf{y}'|\boldsymbol{\mu}',\mathbf{Q}') = p(\mathbf{y}|\boldsymbol{\mu},\mathbf{Q})\left|\operatorname{det}\left(\frac{d\mathbf{y}}{d\mathbf{y}'}\right)\right|\label{eq:gmrf-b}
    \end{align}
    where the Jacobian determinant is given by
    \begin{align}
        \left|\operatorname{det}\left(\frac{d\mathbf{y}}{d\mathbf{y}'}\right)\right| &= \left|\operatorname{det}(c^{-1}\mathbf{I})\right| \\
        &= (1/c)^N.
    \end{align}
    Thus, \eqref{eq:gmrf-b} can be rewritten as
    \begin{equation}
    \begin{split}
        &p(\mathbf{y}'|\boldsymbol{\mu}',\mathbf{Q}') = p(\mathbf{y}|\boldsymbol{\mu},\mathbf{Q})\left|\operatorname{det}\left(\frac{d\mathbf{y}}{d\mathbf{y}'}\right)\right|\\
        &= p(\mathbf{y}'/c|\boldsymbol{\mu},\mathbf{Q})(1/c)^{N}\\
        &= \frac{(\operatorname{det}\mathbf{Q})^{1/2}}{(2\pi)^{N/2}}\exp\left(-\frac{1}{2}(\mathbf{y}'/c-\boldsymbol{\mu})^\top\mathbf{Q}(\mathbf{y}'/c-\boldsymbol{\mu})\right)(1/c)^{N}\\
        &= \frac{(\operatorname{det}(\mathbf{Q}/c^2))^{1/2}}{(2\pi)^{N/2}}\exp\left(-\frac{1}{2}(\mathbf{y}' -c\boldsymbol{\mu})^\top\frac{\mathbf{Q}}{c^2}(\mathbf{y}' - c\boldsymbol{\mu})\right).
        \label{eq:gmrf-c}
    \end{split}
    \end{equation}
    If $\mathbf{L}$ is normalized, the term $1/c^2$ for $\mathbf{Q}$ can be ignored from \eqref{eq:gmrf-q} and the scaling term only affects the mean of the signal: This results in homogeneity.
    
\end{proof}

\noindent\textbf{Strong Passivity}:
According to \cite{doi:10.1137/16M1102884}, when (local) homogeneity is satisfied, it is known that $\nabla\mathcal{D}_{\text{graph}}(\mathbf{x})\mathbf{x} = f(\mathbf{L})\mathbf{x}$.
Therefore, the condition is satisfied when $\eta(f(\mathbf{L}))\leq 1$. 
Since $\mathbf{L}$ is a positive-semidefinite matrix, the maximum eigenvalue of $(\mathbf{I} + \alpha\mathbf{L})^{-1}$ is always less than or equal to 1.
This implies that strong passivity is satisfied.

\subsubsection{GAT and PnP-ADMM} \label{subsubsec:gat-and-pnp}
Both GAT and PnP-ADMM are black-box denoisers (at least, partially).
Therefore, we experimentally validate that they satisfy the two conditions for RED using the synthetic and real-world 3-D point cloud datasets employed in our experiments (see Section \ref{sec:experiments}-\ref{sec:experiments-setup} for further details).

\noindent\textbf{(Local) Homogeneity}:
Fig. \ref{fig:cond1} shows a comparison of $\mathcal{D}_{\text{graph}}(c\cdot \mathbf{y})$ and $c\cdot \mathcal{D}_{\text{graph}}(\mathbf{y})$, and it is clear that GAT and PnP-ADMM have (local) homogeneity for a small $c$. 

\noindent\textbf{Strong Passivity}:
Fig. \ref{fig:cond2} shows $\|\mathcal{D}_{\text{graph}}(\mathbf{y})\|^2/\|\mathbf{y}\|^2$ in various datasets. 
Both GAT and PnP-ADMM show $\|\mathcal{D}_{\text{graph}}(\mathbf{y})\|^2 / \|\mathbf{y}\|^2 \leq 1$ for all datasets.
This implies the denoisers are bounded: They can be practically used for RED.
The above facts imply that many graph signal denoisers $\mathcal{D}_{\text{graph}}$ can be applied to RED for graph signals. 

Note that, a well-trained GNN for signal denoising can be readily expected to satisfy these conditions. 
Furthermore, it is also possible to use GNNs that are theoretically proven to satisfy them \cite{velasco2024Graph, nt2019Revisiting, jia2023Stabilizing}. 

We should be careful when using methods where dynamic ranges between the input and output change significantly, such as GNNs for graph classification.
In this case, it is necessary to verify that the conditions are satisfied. 

\begin{figure*}[tp]
  \centering
  \begin{minipage}[]{\linewidth}
    \centering
    \includegraphics[height=4.5cm]{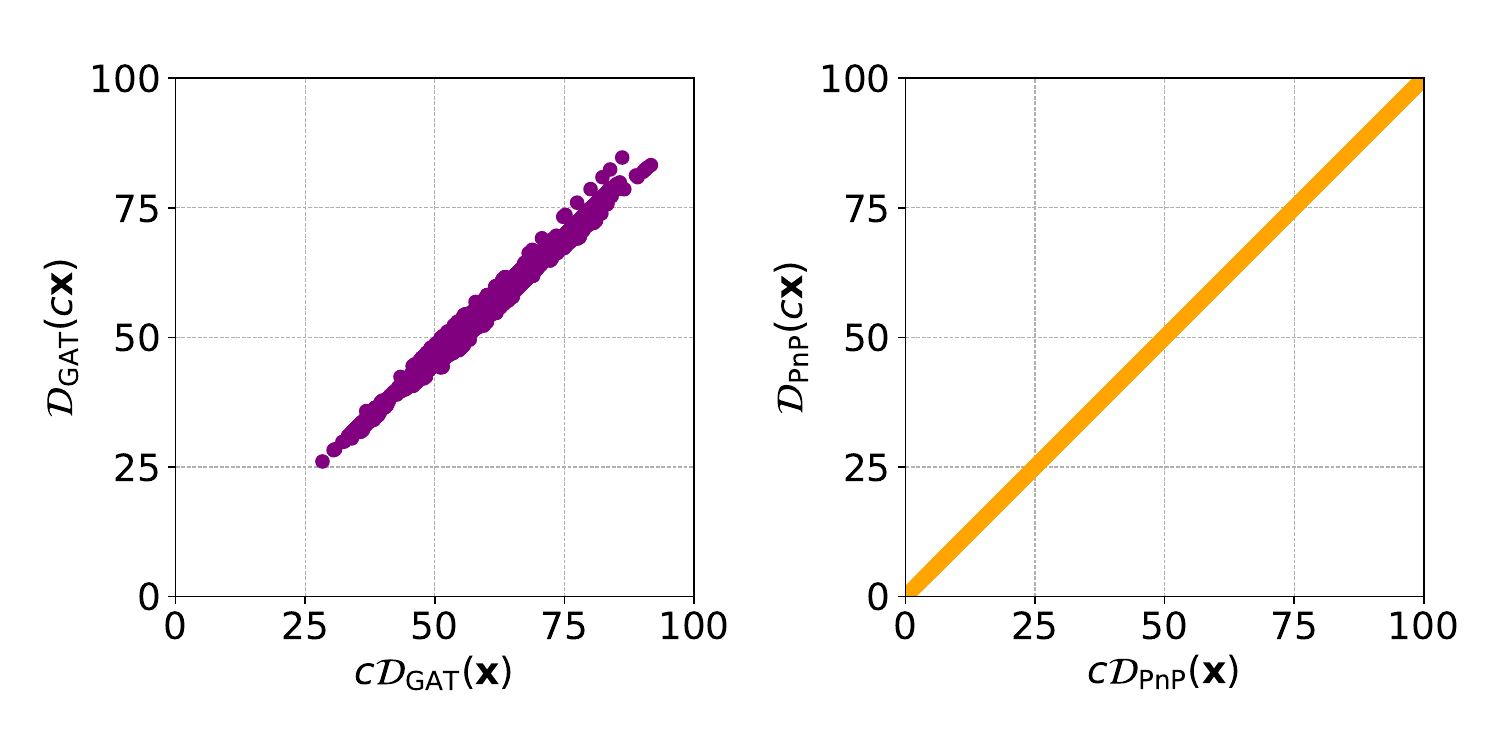}
    \subcaption{}
    \label{fig:cond1}
  \end{minipage}
  \begin{minipage}[]{\linewidth}
    \centering
    \includegraphics[height=4cm]{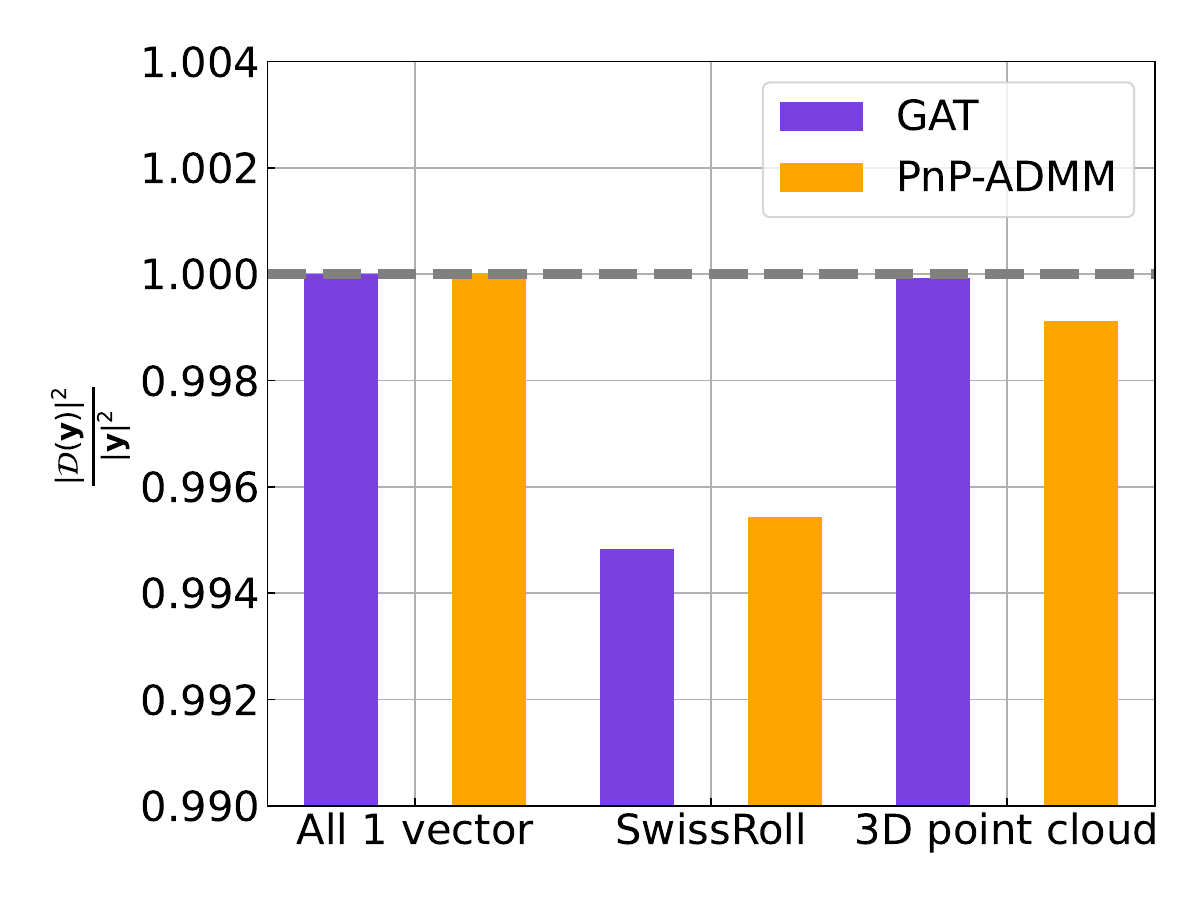}
    \subcaption{}
    \label{fig:cond2}
  \end{minipage}
  \caption{Applicability of GAT and PnP-ADMM for RED. (a) Scatter plot of $\mathcal{D}_{\text{graph}}(c\mathbf{x})$ vs $c\mathcal{D}_{\text{graph}}(\mathbf{x})$ ($c=1.1$). Left: GAT. Right: PnP-ADMM. We use a real-world point cloud dataset (details are shown in Section \ref{sec:experiments-setup}). (b) Bar plot of $\|\mathcal{D}_{\text{graph}}(\mathbf{y})\|^2/\|\mathbf{y}\|^2$. ``All 1 vector'' refers to $[1, \dots, 1]^\top$.}
\end{figure*}

\begin{figure}[tbp]
    \centering
    \includegraphics[width=4.5cm]{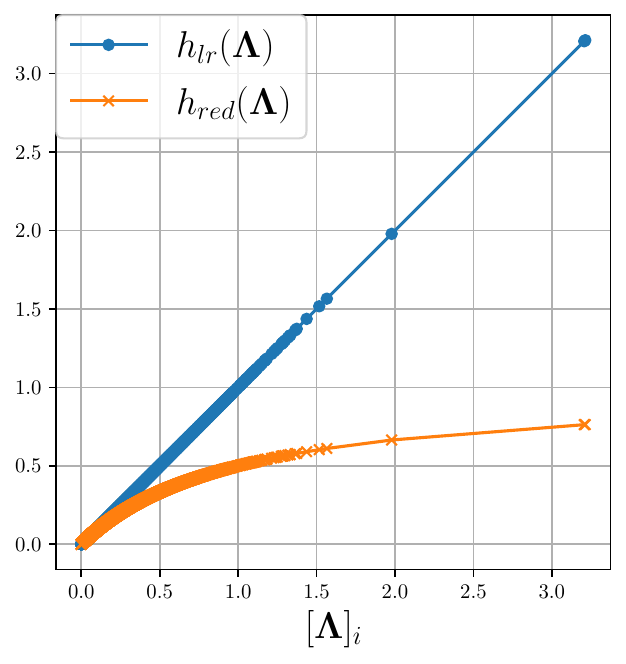}
    \caption{Comparion of $h_\text{lr}(\boldsymbol{\Lambda})$ and $h_\text{red}(\boldsymbol{\Lambda})$ using 3-D point cloud dataset. $\alpha_\text{lr}$ and $\alpha_\text{red}$ are fitted as same as in experiments.}
    \label{fig:LRvsRED}
\end{figure}

\vspace{0.5cm}
Based on the above empirical results, the RED algorithm can be applied to graph signal denoising in practice. 
We only need to replace $\mathcal{D}_\text{image}$ in \eqref{eq:RED} with $\mathcal{D}_\text{graph}$.
\eqref{eq:proposed} can be solved by one of many iterative algorithms such as conjugate gradient (shown in Algorithm \ref{algorithm:proposed-red}) and ADMM \cite{boyd2010Distributed}. 

Note that both LR and PnP internally require matrix inversion or eigendecomposition, making them computationally expensive for large-scale graphs. 
To address this issue, several approximation techniques could be straightforwardly employed. 
For instance, the matrix inversion can be efficiently approximated using methods like Chebyshev polynomial approximation \cite{hammond2011Wavelets,shuman2011Chebyshev}. 
Furthermore, it is also possible to incorporate existing denoising methods designed for large-scale graphs \cite{zhang2023Subgraph,wu2023SGFormer}, such as those based on distributed processing or graph partitioning, as the inner denoiser within our framework.

\begin{algorithm}[tb]
    \caption{Graph Signal Denoising Using RED}
    \label{algorithm:proposed-red}
    \begin{tabular}[t]{@{}ll} %
        \textbf{Input:} \\
        $\mathbf{Y}$ & : Observed signal \\
        $K$ & : Number of layers \\
        $\alpha_\text{RED}$ & : Hyperparameter in \eqref{eq:proposed} \\
        $\alpha_\text{denoiser}$ & : Hyperparameter used in $\mathcal{D}_\text{graph}$ \\
        \textbf{Output:} \\
        $\tilde{\mathbf{x}}^{(K)}$ & : Denoised signal
    \end{tabular}
    
    \hrulefill
    \begin{algorithmic}[1]
        \State{\textbf{Initialization: }}
        \State{~~~$\tilde{\mathbf{x}}^{(0)} = \mathbf{0}$}
        \State{~~~$\mathbf{x}_\nabla^{(0)} = \tilde{\mathbf{x}}^{(0)} - \mathbf{y} + \alpha_\text{red} \left(\tilde{\mathbf{x}}^{(0)} - \mathcal{D}_{\text{graph}}(\tilde{\mathbf{x}}^{(0)})\right)$}
        \State{~~~$\mathbf{x}_\Delta = - \mathbf{x}_\nabla^{(0)}$}

        \For{$k = 1$ to $K$}
            \State{$\tau = - \frac{\sum \mathbf{x}_\Delta \circ \mathbf{x}_\nabla^{(k-1)}}{\sum \mathbf{x}_\Delta \circ \left( \mathbf{x}_\Delta + \alpha_\text{red} \left(\mathbf{x}_\Delta - \mathcal{D}_{\text{graph}}(\mathbf{x}_\Delta)\right)\right)}$}
            \State{$\tilde{\mathbf{x}}^{(k)} = \tilde{\mathbf{x}}^{(k-1)} + \tau \mathbf{x}_\Delta$}
            \State{$\mathbf{x}_\nabla^{(k)} = \tilde{\mathbf{x}}^{(k)} - \mathbf{y} + \alpha_\text{red} \left(\tilde{\mathbf{x}}^{(k)} - \mathcal{D}_{\text{graph}}(\tilde{\mathbf{x}}^{(k)})\right)$}
            
            \State{$\gamma = \frac{\| \mathbf{x}^{(k)}_\nabla \|_F^2 }{\| \mathbf{x}^{(k-1)}_\nabla \|_F^2 }$}
            \State{$\mathbf{x}_\Delta = - \mathbf{x}^{(k)}_\nabla + \gamma \mathbf{x}_\Delta$}
        \EndFor
    \end{algorithmic}
\end{algorithm}

\subsection{Parameter Learning Using Deep Algorithm Unrolling and Noise2Noise}
Our proposed algorithm is shown in Algorithm \ref{algorithm:proposed-red}.
In fact, \eqref{eq:proposed} requires estimating the regularization parameter of RED, $\alpha_{\text{red}}$, and the hyperparameter of the internal denoiser, $\alpha_{\text{denoiser}}$. 
Here, we propose a supervised/unsupervised parameter estimation method using deep algorithm unrolling.

In the unrolled version of Algorithm \ref{algorithm:proposed-red}, the following parameters are estimated: the set of RED regularization parameters for each layer (i.e., iteration) $\boldsymbol{\alpha}_{\text{red}} = \{\alpha_{\text{red}}^{(0)}, \alpha_{\text{red}}^{(1)}, \dots, \alpha_{\text{red}}^{(K)}\}$, and the set of hyperparameters in the internal denoiser for each layer $\boldsymbol{\alpha}_{\text{denoiser}} = \{\alpha_{\text{denoiser}}^{(0)}, \alpha_{\text{denoiser}}^{(1)}, \dots, \alpha_{\text{denoiser}}^{(K)}\}$, where $K$ is the number of iterations in the unrolled algorithm. 
This allows the algorithm to perform denoising with different strengths at different stages. 
As a result, the unrolled algorithm is expected to achieve not only fast convergence but also a high-quality solution compared to its original, non-unrolled counterpart. 

For supervised learning, trainable parameters are learned by using the ground-truth. 
On the other hand, for unsupervised learning, we perform Noise2Noise \cite{lehtinen2018Noise2Noise} in \eqref{eq:noise2noise}. 

\subsection{RED from Graph Filter Perspective}
\label{sec:proposed-graph-filter-perspective}
Here, we show the effectiveness of RED from the viewpoint of graph filters.
To understand their impact on denoising performance, we compare LR and RED.
Here, we assume LR is also solved by an iterative algorithm and 
the gradient update steps are compared.

If noise is sufficiently small ($\mathbf{x}^*-\mathbf{y} \simeq \mathbf{0}$), the gradient of the regularization term of LR is given by
\begin{equation}
    \begin{split}
        \nabla \left(\frac{1}{2}\|\mathbf{x}-\mathbf{y}\|^2_2 + \frac{\alpha_\text{lr}}{2}\mathbf{x}^\top\mathbf{L}\mathbf{x}\right) &= \mathbf{x}-\mathbf{y} + \alpha_\text{lr}\mathbf{L}\mathbf{x}\\
        & \sim \alpha_{\text{lr}} \mathbf{L}\mathbf{x}\\
        & = \mathbf{U}h_{\text{lr}}(\boldsymbol{\Lambda}, \alpha_\text{lr})\mathbf{U}^\top\mathbf{x},
    \end{split}
\end{equation}
where we use eigendecomposition of graph Laplacian and $h_{\text{lr}}(\boldsymbol{\Lambda}, \alpha_\text{lr})=\alpha_{\text{lr}}\boldsymbol{\Lambda}$.
Therefore, the gradient of LR as a graph filter can be represented as a graph high-pass filter.

We then consider the gradient of the regularization term of RED.
For simplicity, we use LR as an internal denoiser.
Its gradient is given by
\begin{equation}
\begin{split}
    \nabla \left(\frac{\alpha_{\text{red}}}{2}\mathbf{x}^\top(\mathbf{x}-\mathcal{D}_{\text{lr}}(\mathbf{x}))\right) &= \alpha_{\text{red}}\left(\mathbf{x}-\mathcal{D}_{\text{lr}}(\mathbf{x})\right)\\
    &= \alpha_\text{red}\left(\mathbf{x}-(\mathbf{I}+\alpha_\text{lr}\mathbf{L})^{-1}\mathbf{x}\right)\\
    &= \mathbf{U}h_\text{red}(\boldsymbol{\Lambda}, \alpha_\text{red})\mathbf{U}^\top\mathbf{x},
\end{split}
\end{equation}
where $h_\text{red}(\boldsymbol{\Lambda}, \alpha_\text{red})$ is the graph filter of RED and it can be represented as
\begin{equation}
\begin{split}
	h_{\text{red}}(\boldsymbol{\Lambda}, \alpha_\text{red}) &= \alpha_{\text{red}}\left( \mathbf{I}-( \mathbf{I}+ h_{\text{lr}}(\boldsymbol{\Lambda}, \alpha_\text{lr}))^{-1}\right)\\
&=\alpha_{\text{red}}\left(\mathbf{I}+ h_{\text{lr}}(\boldsymbol{\Lambda}, \alpha_\text{lr})\right)^{-1}\cdot h_{\text{lr}}(\boldsymbol{\Lambda}, \alpha_\text{lr}).
\end{split}
\end{equation}
In other words, $h_{\text{red}}(\boldsymbol{\Lambda}, \alpha_\text{red})$ can be regarded as a frequency-dependent scaled version of $h_{\text{lr}}(\boldsymbol{\Lambda}, \alpha_\text{lr})$.

Graph frequency responses of $h_\text{lr}(\boldsymbol{\Lambda}, \alpha_\text{lr})$ and $h_\text{red}(\boldsymbol{\Lambda}, \alpha_\text{red})$ are compared in Fig. \ref{fig:LRvsRED}.
It is clear that both regularizations work as graph high-pass filters, however, their spectral responses are different.
$h_\text{red}(\boldsymbol{\Lambda}, \alpha_\text{red})$ strongly attenuates high-frequency components in contrast to $h_\text{lr}(\boldsymbol{\Lambda}, \alpha_\text{lr})$: This results in that iterations in RED in \eqref{eq:RED-grad} \textit{preserve} high graph frequency components and eliminate oversmoothing. 

\section{Experiments}
\label{sec:experiments}
In this section, we demonstrate the effectiveness of the proposed method through graph signal denoising for synthetic and real-world data\footnote{Official implementation is available at \url{https://github.com/kojima-msp/GRED_DAU}.}.

\subsection{Setup}
\label{sec:experiments-setup}

\subsubsection{Datasets} \label{subsubsec:datasets}
We use following synthetic and real-world datasets for the experiments.

\noindent
\textbf{Synthetic Dataset: }
We use bandlimited graph signals as a synthetic dataset.
First, we randomly generate $N=100$ nodes in $[0,100] \times [0,100]$.
A $k$-nearest neighbor ($k$NN) graph ($k=5$) is constructed for the nodes where edge weights between nodes are calculated as \eqref{eq:calc-W} and normalized to lie within the range $[0, 1]$.
The original bandlimited graph signal $\mathbf{x}^*$ is then generated as
\begin{equation}
    \mathbf{x}^* = \mathbf{U}_{N_\text{band}} \mathbf{d},
\end{equation}
where $N_{\text{band}} \le N$ denotes the bandwidth, $\mathbf{U}_{N_\text{band}} \in \mathbb{R}^{N \times N_\text{band}}$ contains the first $N_\text{band}$ eigenvectors of $\mathbf{U}$, and $\mathbf{d} \in \mathbb{R}^{N_{\text{band}}}$ is defined as
\begin{equation}
    d_k = \sin\left( \frac{k\pi}{N_{\text{band}}} \right) + C,\quad 
    k=1, \dots, N_\text{band}
\end{equation}
with a constant value $C$. 
In this experiment, the original graph signal was generated with $N_{\text{band}} = 3$ and $C = 2$, and scaled to $[0, 100]$.

We use 10 signals as training data and 5 signals as test data. 
All graph signals are corrupted by additive white Gaussian noise $\mathcal{N}(0,\sigma^2)$ with $\sigma = \{10, 15, 20, 25, 30\}$.

\noindent
\textbf{Real-world Dataset: }
As a real-world dataset, we use ModelNet10 \cite{zhirongwu20153d}.
ModelNet10 is a 3-D polygon mesh dataset consisting of about 5000 CAD data in 10 classes.

In this experiment, we randomly select 10 objects as training data and 10 objects as test data.
Since the number of meshes is small, we first oversample them using a mesh subdivision method \cite{loop1987smooth} to treat them as dense 3-D point clouds. 
After that, we downsample them using farthest point sampling (FPS) \cite{eldar1994Farthest} so that the maximum number of nodes in each objects is $N=500$.
As in the case of synthetic data, we first construct a graph. Each node is connected by an edge based on the $k$NN with $k=5$, and the edges are weighted by \eqref{eq:calc-W}. 
The graph is then corrupted by additive white Gaussian noise $\mathcal{N}(0,\sigma^2)$ with $\sigma = \{10, 15, 20, 25, 30\}$. 

\begin{table*}[tp]
\footnotesize
\centering
\caption{Overview of comparison and proposed methods. ``S:'' and ``U:'' denote ``Supervised'' and ``Unsupervised,'' respectively.}
\label{table:methods}
\begin{tabular}{l|l|l}
\bhline{1.1pt}
Method type & \multicolumn{1}{c|}{Name} & \makecell{\# of \\Parameters} \\
\hline\hline
\multirow{4}{*}{S: Model-based} & Laplacian Regularization (LR) \cite{pang2017Graph} & 1 \\
 & Plug-and-play ADMM (PnP) \cite{yazaki2019Interpolation} & 2 \\
 & RED (LR) & 2 \\
 & RED (PnP) & 3 \\ \hline
\multirow{2}{*}{S: DAU-based} & RED (LR-DAU) & 22 \\
 & RED (PnP-DAU) & 33 \\ \hline\hline
\multirow{2}{*}{U: Data-driven} & Graph Attention Network (GAT) \cite{velickovic2018Graph} & 8501 \\
 & Untrained Graph Neural Networks (UGNN) \cite{rey2022Untrained} & 7550 \\ \hline
U: DAU-based & RED (LR-Unsupervised) & 22 \\
\bhline{1.1pt}
\end{tabular}
\end{table*}

\subsubsection{Comparison Methods} \label{exp:comparison-methods}
Methods to be compared are summarized in Table \ref{table:methods}.
They include both of model-based and data-driven methods.
Hyperparameters in the model-based methods are determined from training data using Bayesian optimization \cite{optuna_2019}. 
The network architecture of GAT is the same as \cite{rey2022Untrained}. 
The number of iterations $K$ of the proposed method and PnP-ADMM is set to $K=10$, considering the trade-off between denoising performance and computational cost.
The internal graph denoiser $\mathcal{D}_{\text{graph}}$ for PnP-ADMM is LR.
The proposed method uses two denoisers, LR or PnP-ADMM, for regularization: Their abbreviations are RED (LR) and RED (PnP), respectively. 
In the proposed supervised and unsupervised methods, parameters are trained using Adam \cite{DBLP:journals/corr/KingmaB14} with the learning rate $0.01$. 
In the proposed unsupervised method, $\sigma_{\text{N2N}}$ in \eqref{eq:n2n} is randomly selected from $[0, 0.4 \cdot \max(\mathbf{|x|})]$.

\subsection{Results: Synthetic Dataset}
\noindent
\textbf{RMSE: }
Table \ref{table:result-syn} shows the average of root mean square errors (RMSEs) for the synthetic dataset across various $\sigma$.

In the supervised setting, our model-based proposed methods, RED (LR) and RED (PnP), consistently achieve lower RMSEs than the conventional LR and PnP methods across all noise levels. 
This generally indicates the effectiveness of the RED algorithm for graph signal denoising beyond image processing.

The two model-based RED methods, RED (LR) and RED (PnP) show comparable RMSEs.
This could be the simple structure of the signals: Later, we show RED (PnP) is slightly better than RED (LR) for more complex-structured data in Section \ref{sec:results-real}.

The DAU-incorporated RED methods, RED (LR-DAU) and RED (PnP-DAU), significantly outperform their respective model-based counterparts. 
This demonstrates the power of DAU in learning iteration-specific parameters, leading to further RMSE improvements.
Notably, RED (PnP-DAU) consistently yields lower RMSEs than RED (LR-DAU) across almost all noise levels. 
A primary reason for this superiority is that iteration-specific parameters in DAU help the PnP algorithm converge faster. 

In the unsupervised setting, RED (LR-Unsupervised) demonstrates superior denoising performance compared to existing data-driven GNNs like GAT and UGNN. 
This even with significantly fewer parameters.

\begin{table*}[tp]
\centering
\small
\caption{RMSEs using synthetic dataset. The best results for supervised and unsupervised methods are shown in \textbf{bold}.}
\label{table:result-syn}
\begin{tabular}{l|ccccc}
\bhline{1.1pt}
 & $\sigma=10$ & 15 & 20 & 25 & 30 \\
\hline\hline
Observed & 10.09 & 14.74 & 19.48 & 25.23 & 32.00 \\ \hline
LR & 3.75 & 5.02 & 5.31 & 6.78 & 9.57 \\
PnP & 3.81 & 5.08 & 5.40 & 6.88 & 9.69 \\
RED (LR) & 3.67 & 4.85 & 4.89 & 6.50 & 9.28 \\
RED (PnP) & 3.61 & 4.83 & 4.92 & 6.51 & 9.29 \\ \hline
RED (LR-DAU) & 3.41 & 4.51 & 4.58 & 5.33 & \textbf{7.35} \\
RED (PnP-DAU) & \textbf{2.60} & \textbf{3.02} & \textbf{4.51} & \textbf{4.18} & 7.67 \\ \hline
GAT & 6.51 & 10.31 & \textbf{10.73} & 14.56 & 27.14 \\
UGNN & 9.15 & 13.40 & 19.47 & 22.63 & 29.25 \\ \hline
RED (LR-Unsupervised) & \textbf{5.73} & \textbf{7.91} & 11.18 & \textbf{11.55} & \textbf{19.52} \\ \bhline{1.1pt}
\end{tabular}
\end{table*}

\noindent
\textbf{Visualization: }
Denoised signals are visualized in Fig. \ref{fig:result_syn_20}. 
These results validate the quantitative RMSE improvements.

While existing model-based methods, such as LR and PnP, show some denoising effects, areas with larger errors are still observable compared to the model-based versions of RED, RED (LR) and RED (PnP).
The DAU-based proposed methods, RED (LR-DAU) and RED (PnP-DAU), further reduce the errors from their respective model-based counterparts, suggesting the effectiveness of parameter learning through DAU.

In unsupervised learning, existing GNN-based methods, GAT and UGNN, represent large errors compared to RED (LR-Unsupervised).
In contrast, RED (LR-Unsupervised) achieves significantly smaller errors than GAT and UGNN, indicating good denoising performance.

\begin{figure*}
  \centering
  \begin{minipage}[t]{0.30\linewidth}
    \centering
    \includegraphics[width=3.5cm]{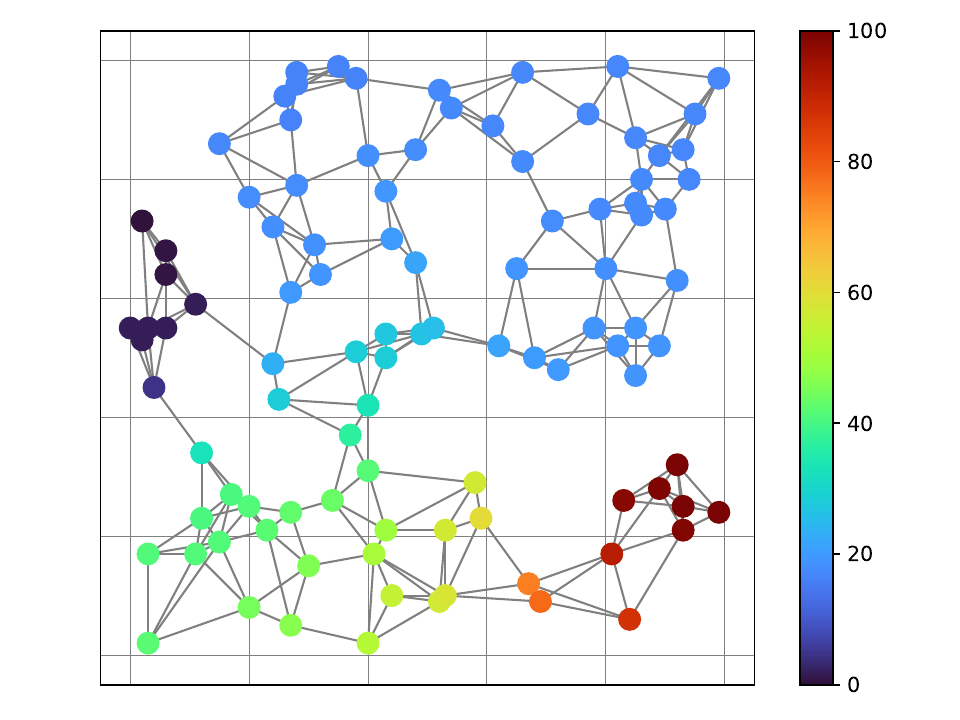}
    \subcaption{Original signal. }
  \end{minipage}
  \begin{minipage}[t]{0.30\linewidth}
    \centering
    \includegraphics[width=3.5cm]{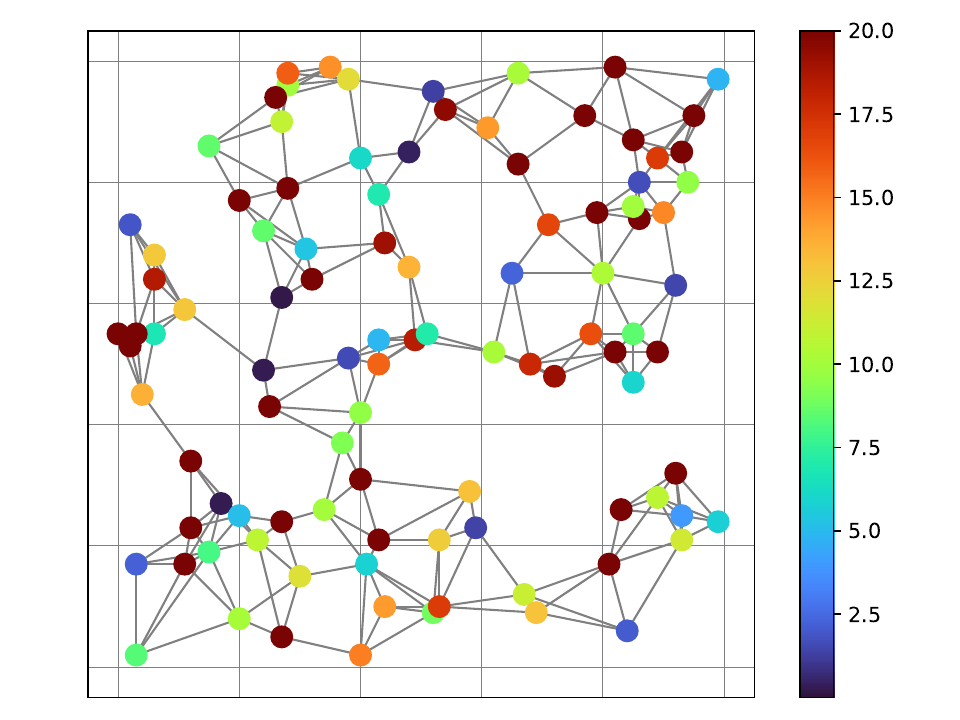}
    \subcaption{Observed signal \\(RMSE:20.03). }
  \end{minipage}
  \begin{minipage}[t]{0.30\linewidth}
    \centering
    \includegraphics[width=3.5cm]{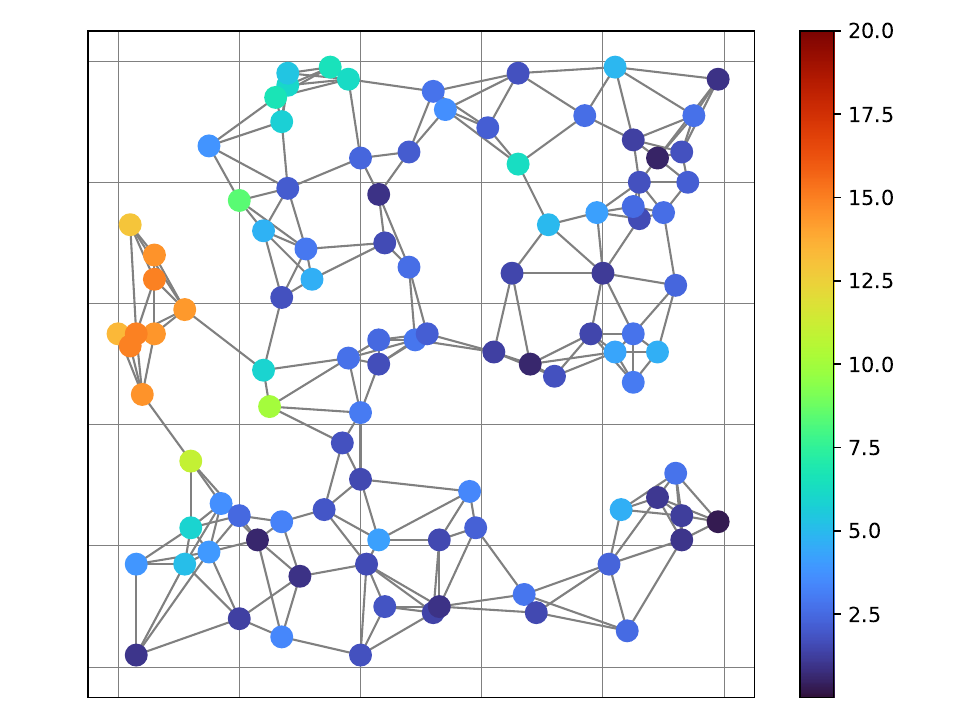}
    \subcaption{LR (5.51). }
  \end{minipage}
  \begin{minipage}[t]{0.30\linewidth}
    \centering
    \includegraphics[width=3.5cm]{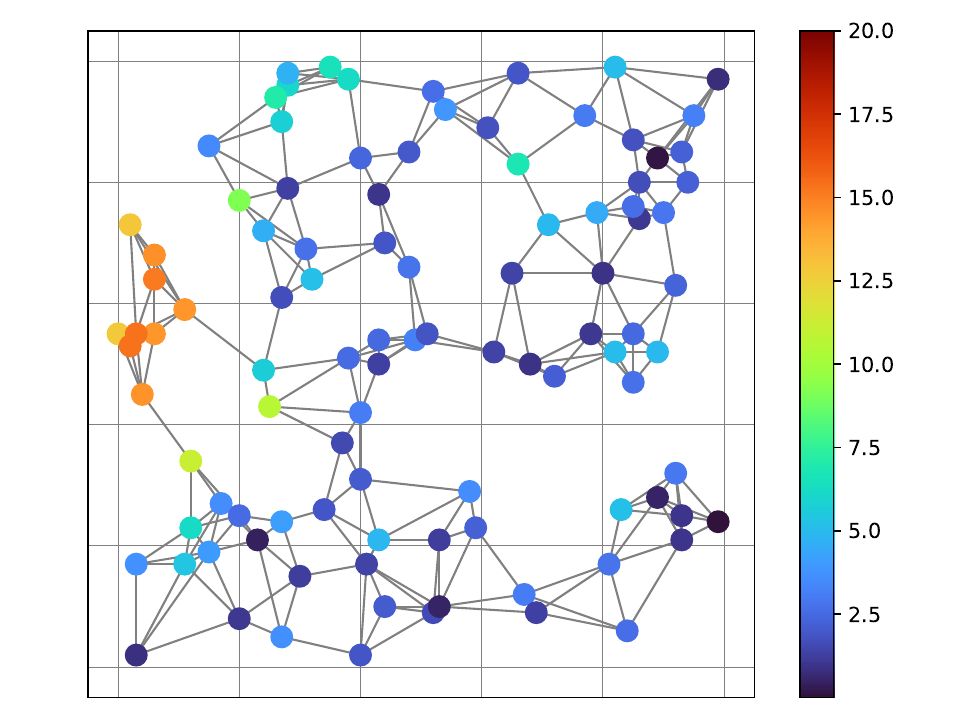}
    \subcaption{PnP (5.60). }
  \end{minipage}
  \begin{minipage}[t]{0.30\linewidth}
    \centering
    \includegraphics[width=3.5cm]{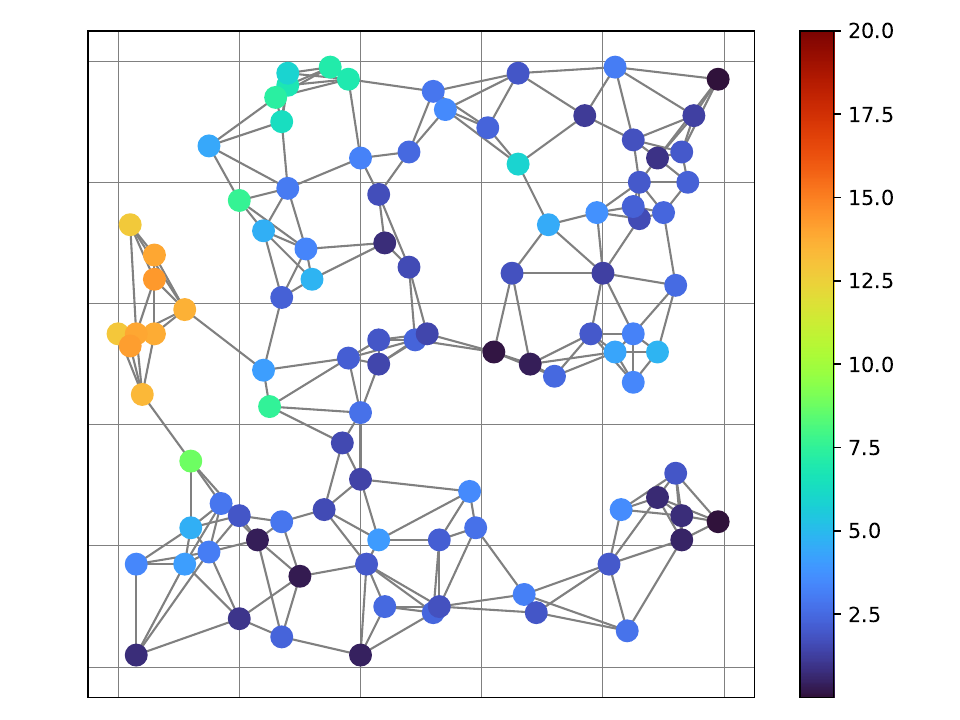}
    \subcaption{RED (LR) (5.21). }
  \end{minipage}
  \begin{minipage}[t]{0.30\linewidth}
    \centering
    \includegraphics[width=3.5cm]{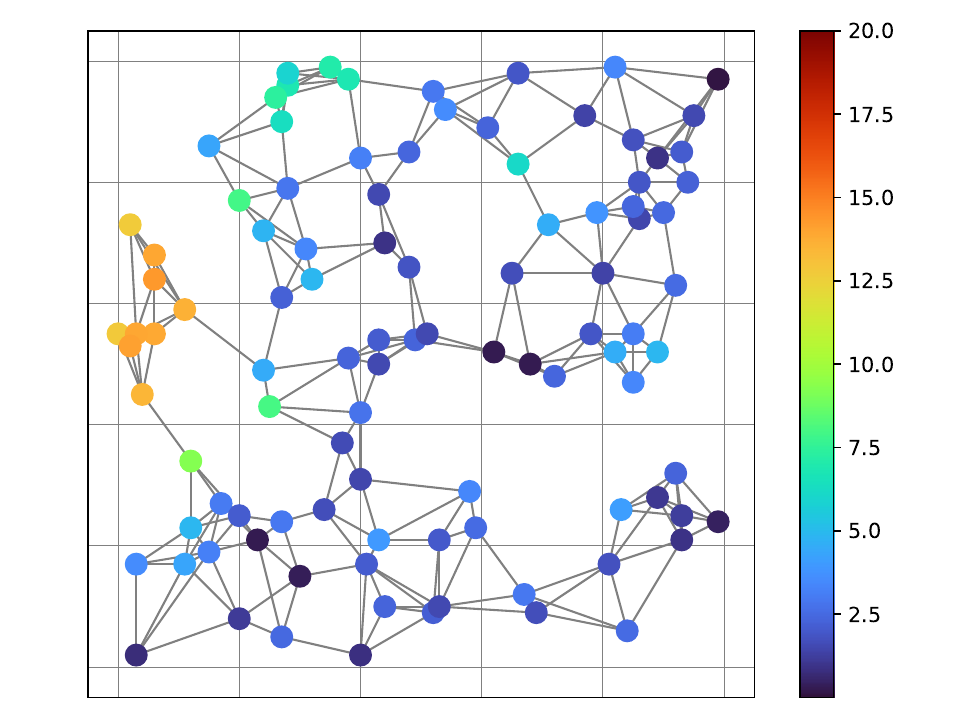}
    \subcaption{RED (PnP) (5.25). }
  \end{minipage}
  \begin{minipage}[t]{0.30\linewidth}
    \centering
    \includegraphics[width=3.5cm]{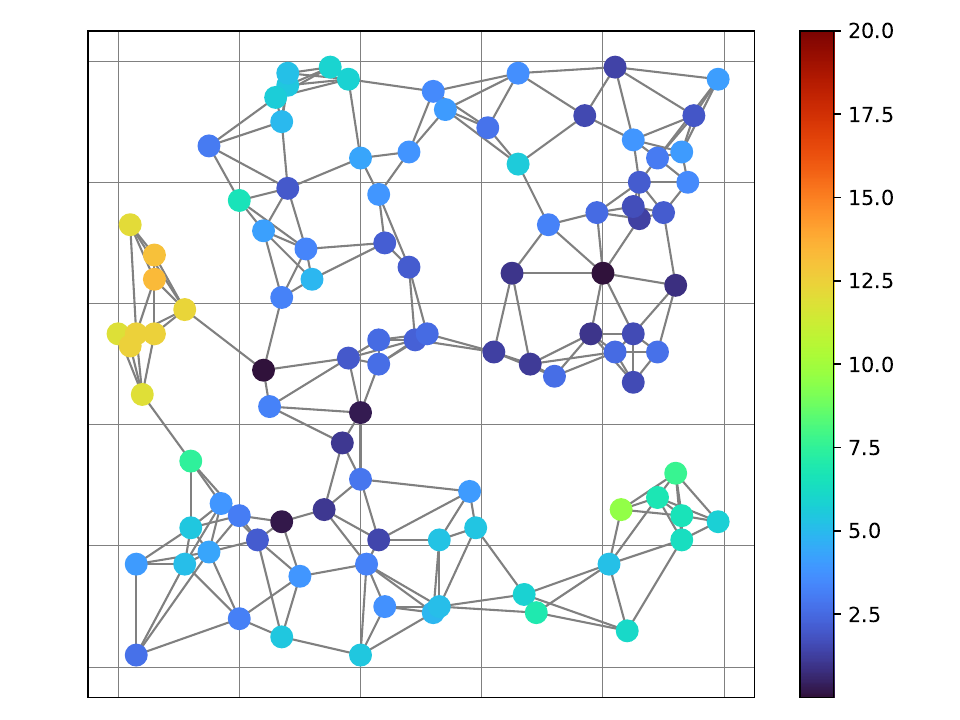}
    \subcaption{RED (LR-DAU) (5.38). }
  \end{minipage}
  \begin{minipage}[t]{0.30\linewidth}
    \centering
    \includegraphics[width=3.5cm]{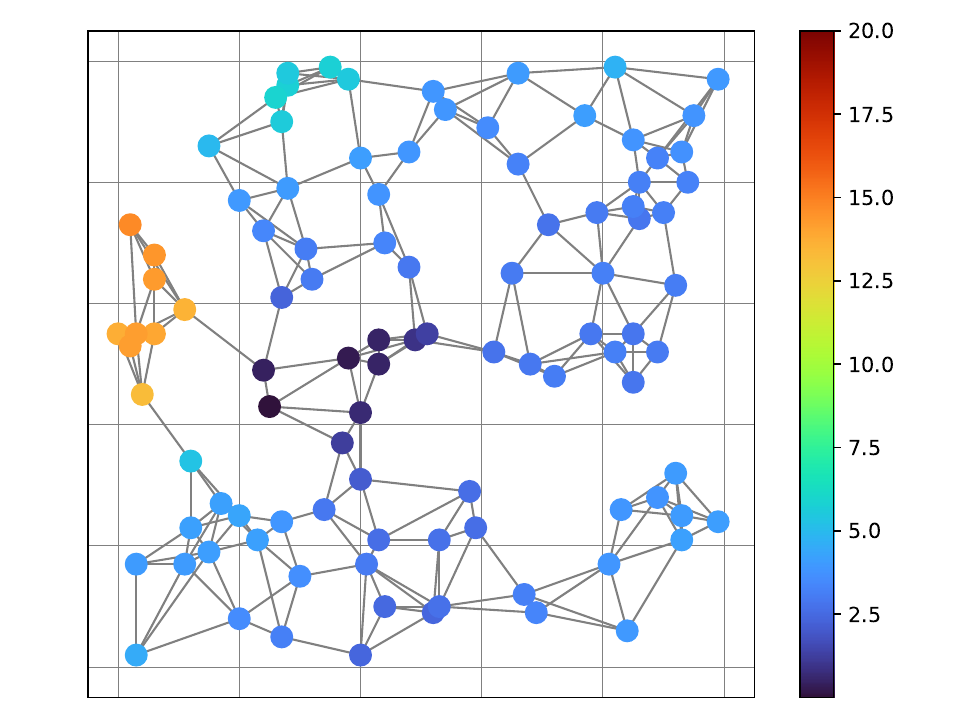}
    \subcaption{RED (PnP-DAU) \\(5.39). }
  \end{minipage}
  \begin{minipage}[t]{0.30\linewidth}
    \centering
    \includegraphics[width=3.5cm]{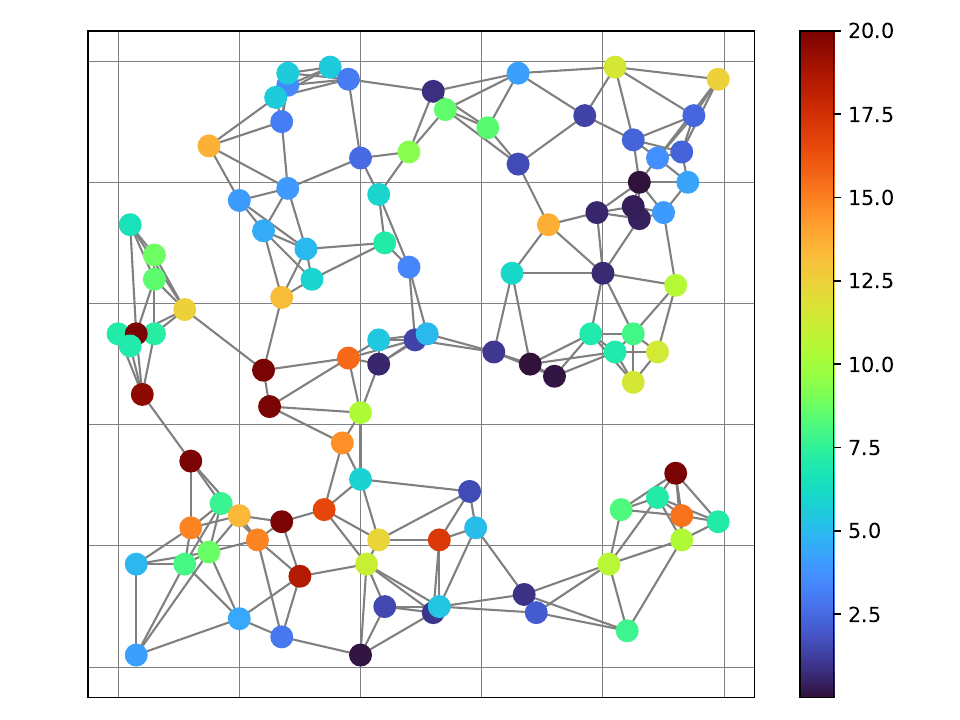}
    \subcaption{GAT (9.93). }
  \end{minipage}
  \begin{minipage}[t]{0.30\linewidth}
    \centering
    \includegraphics[width=3.5cm]{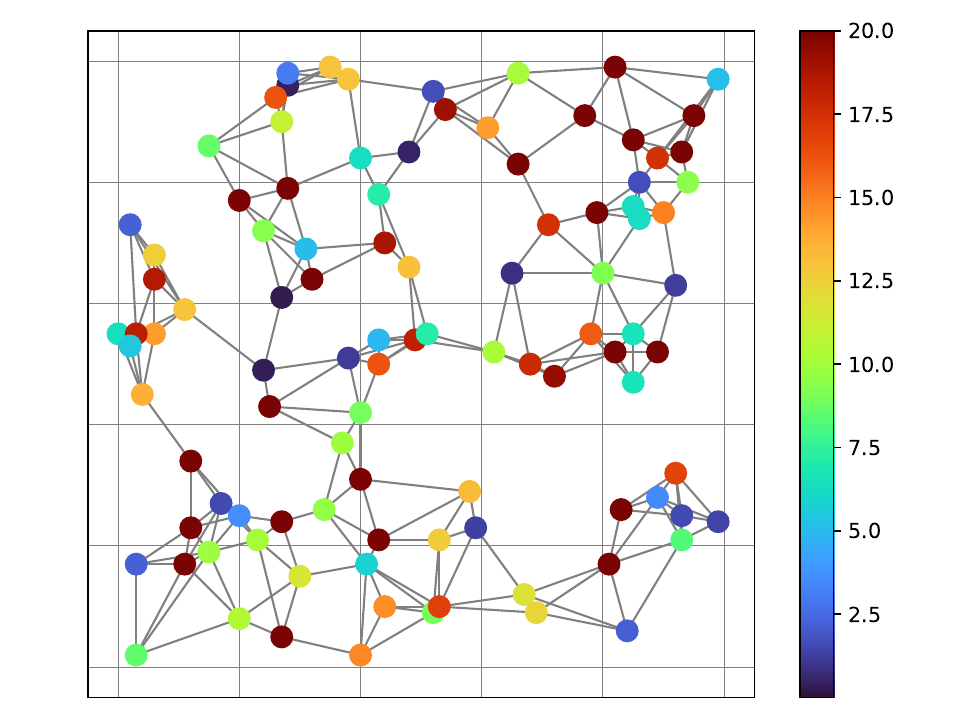}
    \subcaption{UGNN (19.12). }
  \end{minipage}
  \begin{minipage}[t]{0.30\linewidth}
    \centering
    \includegraphics[width=3.5cm]{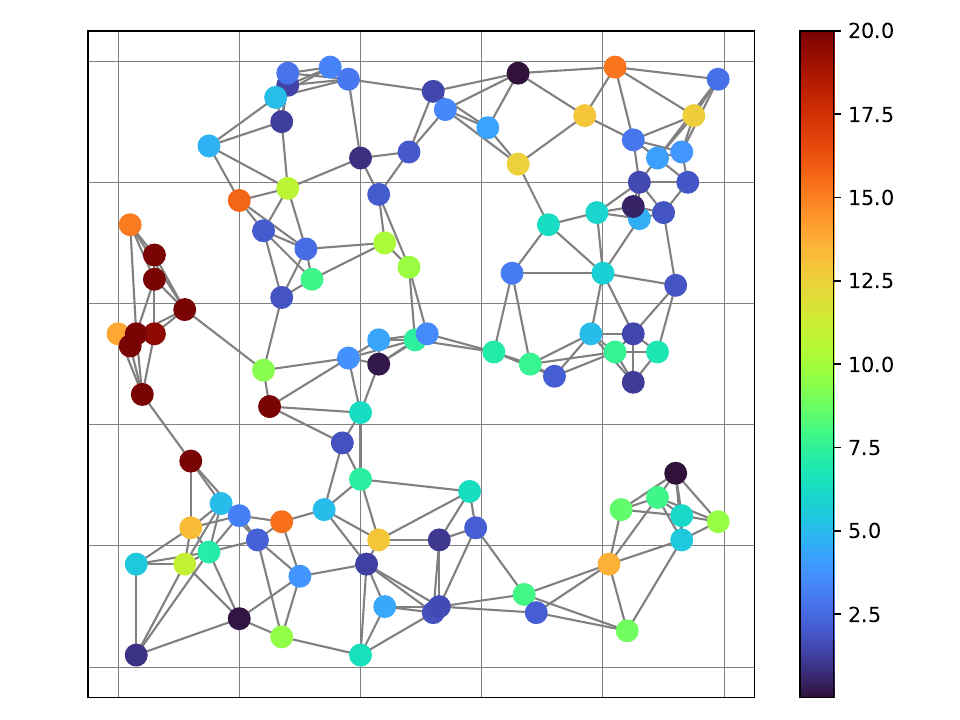}
    \subcaption{RED \\(LR-Unsupervised) (9.23). }
  \end{minipage}
  \begin{minipage}[t]{0.30\linewidth}
  \centering
    \phantom{\includegraphics[width=3.5cm]{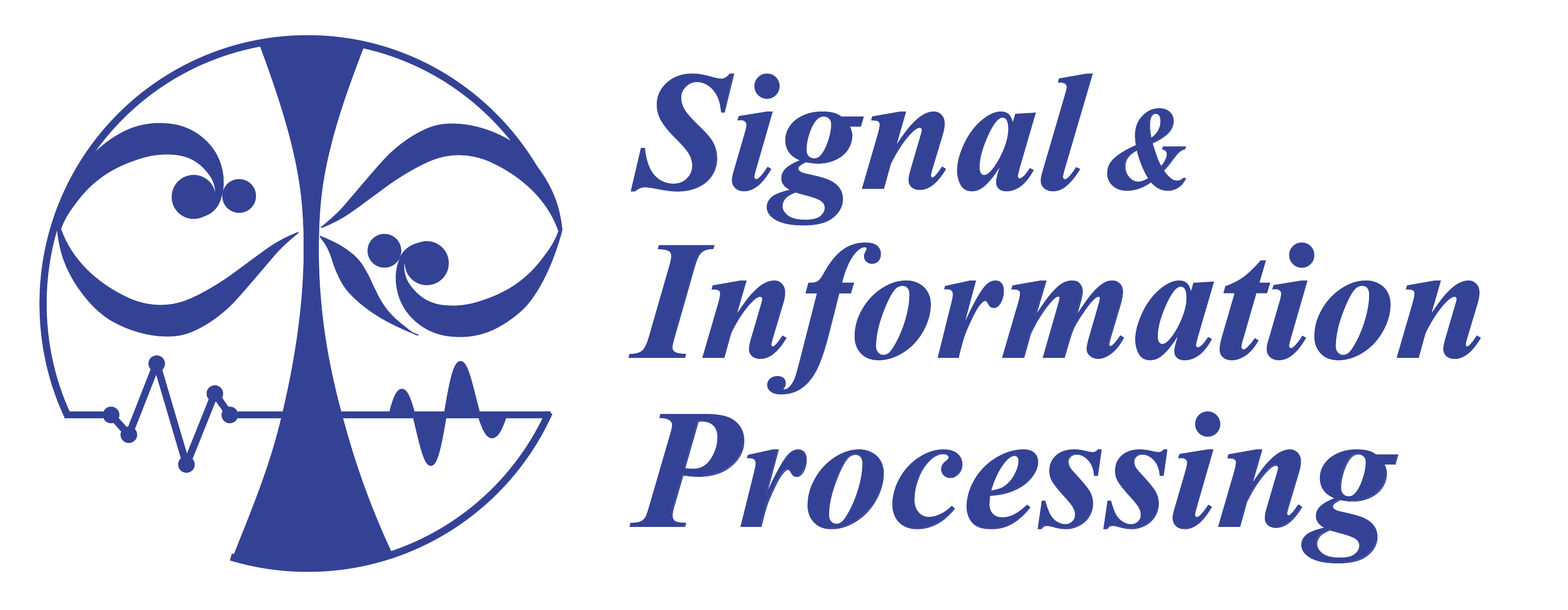}}
  \end{minipage}
  \caption{Visualization results of the proposed and existing methods using synthetic dataset ($\sigma=20$). (b)--(k) show the error $\tilde{\mathbf{x}} - \mathbf{x}$ between the restored signal $\tilde{\mathbf{x}}$ and the original signal $\mathbf{x}$. The values in parentheses indicate the RMSE for this specific sample. Note that these values differ from the average results reported in Table \ref{table:result-syn}.}
  \label{fig:result_syn_20}
\end{figure*}

\subsection{Results: Real-world Dataset}
\label{sec:results-real}
\noindent
\textbf{RMSE: }
Table \ref{table:result-real} presents the RMSE results on the real-world dataset.

In the supervised model-based methods, the proposed methods, RED (LR) and RED (PnP), outperform their baselines, LR and PnP: It is consistent with the synthetic data results. 
On the real-world data, RED (PnP) outperforms RED (LR) at high noise levels ($\sigma=20, 25, 30$). 
This suggests that for the complex geometric structures of the dataset, RED (PnP) may be more effective.
Note that, for a fair comparison, we use the fixed $K$ both for RED (LR) and RED (PnP) while we usually need a large $K$ for PnP-based methods: This may result in further performance improvements.

For the DAU-based supervised methods, we again observe a significant performance improvement over the non-DAU counterparts, which is consistent with the synthetic data. 
Interestingly, while RED (PnP-DAU) is consistently superior for synthetic data, RED (LR, DAU) shows better performance on the real-world data at almost all noise levels. 

In the unsupervised setting, consistent with the synthetic data results, RED (LR-Unsupervised) demonstrates superior performance to the existing GNN-based methods.

\begin{table*}[tp]
\centering
\small
\caption{RMSEs using real-world dataset. The best results for supervised and unsupervised methods are shown in \textbf{bold}.}
\label{table:result-real}

\begin{tabular}{l|ccccc}
\bhline{1.1pt}
 & $\sigma=10$ & 15 & 20 & 25 & 30 \\
\hline\hline
Observed & 10.00 & 15.05 & 19.84 & 24.90 & 29.82 \\ \hline
LR & 3.50 & 4.08 & 4.89 & 5.63 & 6.51 \\
PnP & 3.46 & 4.08 & 4.93 & 5.71 & 6.62 \\
RED (LR) & 3.44 & 4.01 & 4.81 & 5.55 & 6.45 \\
RED (PnP) & 3.58 & 4.07 & 4.80 & 5.44 & 6.30 \\ \hline
RED (LR, DAU) & \textbf{3.18} & 3.94 & \textbf{4.43} & \textbf{4.98} & \textbf{5.64} \\
RED (PnP-DAU) & 3.50 & \textbf{3.87} & 4.47 & 5.37 & 6.12 \\ \hline
GAT & 6.78 & 7.26 & 9.88 & 11.36 & 16.51 \\
UGNN & \textbf{4.67} & 6.73 & 8.91 & 11.06 & 13.31 \\ \hline
RED (LR-Unsupervised) & 4.86 & \textbf{5.53} & \textbf{7.12} & \textbf{9.33} & \textbf{10.97} \\ \bhline{1.1pt}
\end{tabular}
\end{table*}

\noindent
\textbf{Visualization: }
Figure \ref{fig:result_3dpc_20} illustrates visual examples of denoised \textit{Chair} in the ModelNet10 at $\sigma=20$.

As shown in Figs. \ref{fig:result_3dpc_20}(e) and (f), RED (LR) and RED (PnP) demonstrate clear improvements over the baseline LR and PnP methods. 
RED (LR, DAU) and RED (PnP-DAU) achieve a significant reduction in RMSE as seen in Figs. \ref{fig:result_3dpc_20}(g) and (h). 
These methods effectively reduce oversmoothing artifacts, which appears as a loss of sharp features in the baseline methods. 

In the unsupervised setting, denoised signals by the proposed methods present substantially clearer object details than those yielded by alternative methods as shown in Figs. \ref{fig:result_3dpc_20}(i)--(k). 
This suggests that the unsupervised training framework using Noise2Noise successfully learns parameters taking into account unknown strengths of noise without requiring ground-truth data.

Overall, these visualization results are consistent with the numerical performances: They validate the effectiveness of the proposed RED-based graph signal denoising approaches.

\begin{figure*}[tp]
  \centering
  \begin{minipage}[t]{0.3\linewidth}
    \centering
    \includegraphics[width=3.5cm]{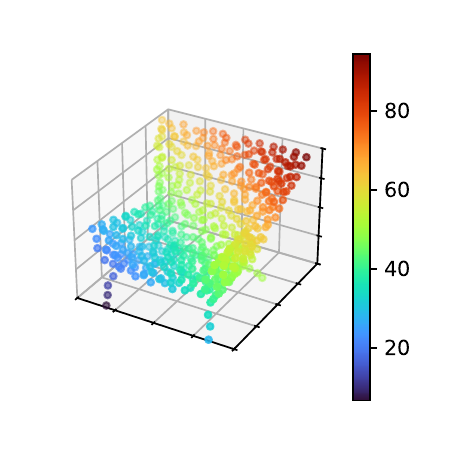}
    \subcaption{Original signal. }
  \end{minipage}
  \begin{minipage}[t]{0.3\linewidth}
    \centering
    \includegraphics[width=3.5cm]{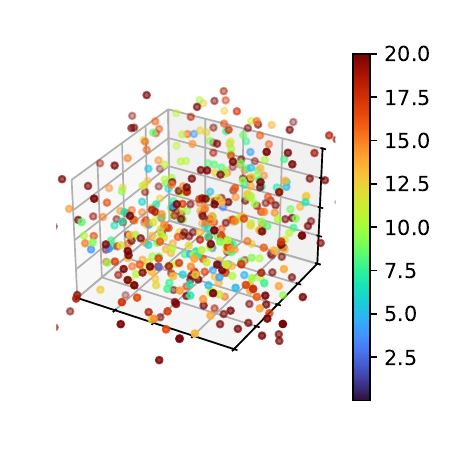}
    \subcaption{Observed signal (RMSE:19.88). }
  \end{minipage}
  \begin{minipage}[t]{0.3\linewidth}
    \centering
    \includegraphics[width=3.5cm]{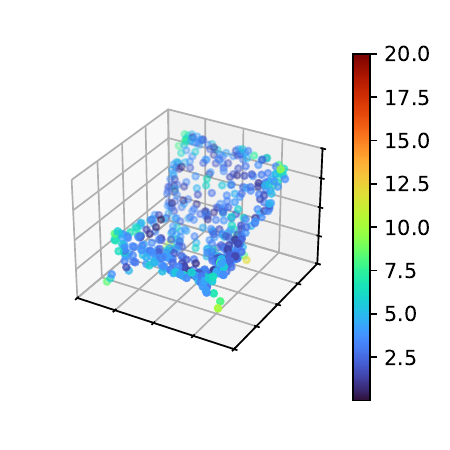}
    \subcaption{LR (4.84). }
  \end{minipage}
  \begin{minipage}[t]{0.3\linewidth}
    \centering
    \includegraphics[width=3.5cm]{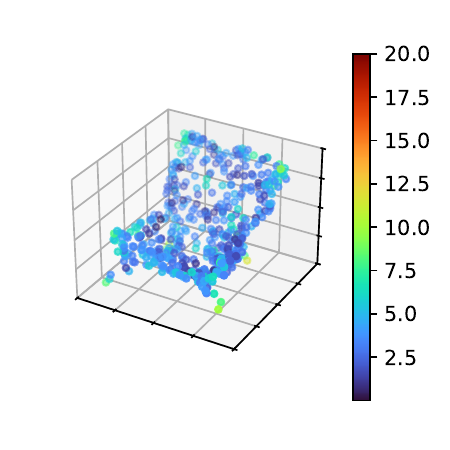}
    \subcaption{PnP (4.82). }
  \end{minipage}
  \begin{minipage}[t]{0.3\linewidth}
    \centering
    \includegraphics[width=3.5cm]{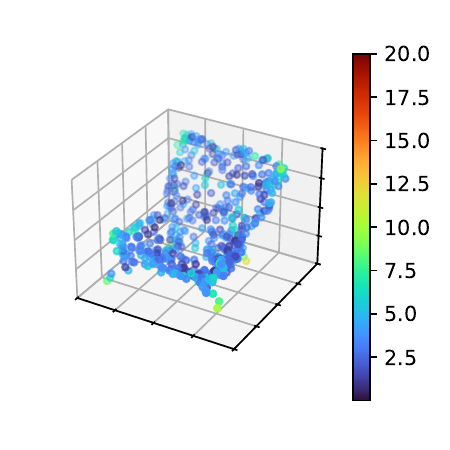}
    \subcaption{RED (LR) (4.63). }
  \end{minipage}
  \begin{minipage}[t]{0.3\linewidth}
    \centering
    \includegraphics[width=3.5cm]{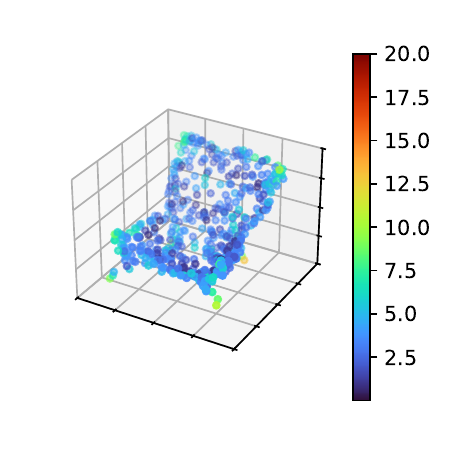}
    \subcaption{RED (PnP) (4.81). }
  \end{minipage}
  \begin{minipage}[t]{0.3\linewidth}
    \centering
    \includegraphics[width=3.5cm]{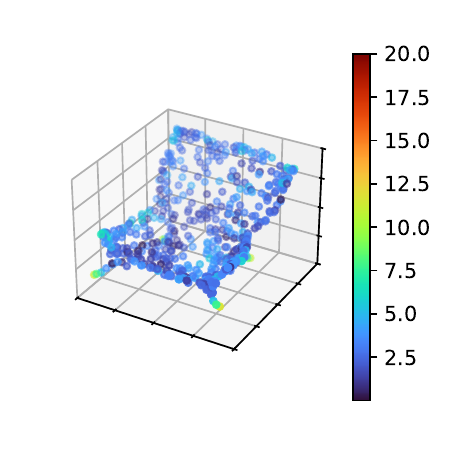}
    \subcaption{RED (LR, DAU) (3.94). }
  \end{minipage}
  \begin{minipage}[t]{0.3\linewidth}
    \centering
    \includegraphics[width=3.5cm]{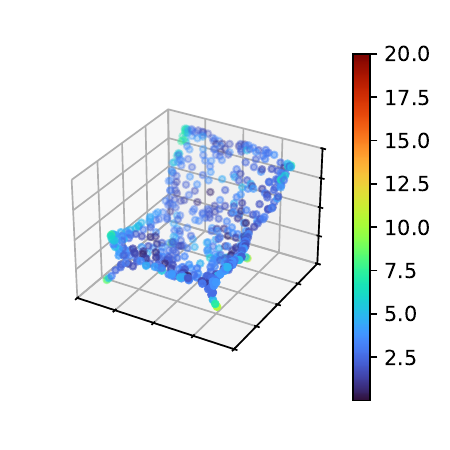}
    \subcaption{RED (PnP-DAU) \\(4.10). }
  \end{minipage}
  \begin{minipage}[t]{0.3\linewidth}
    \centering
    \includegraphics[width=3.5cm]{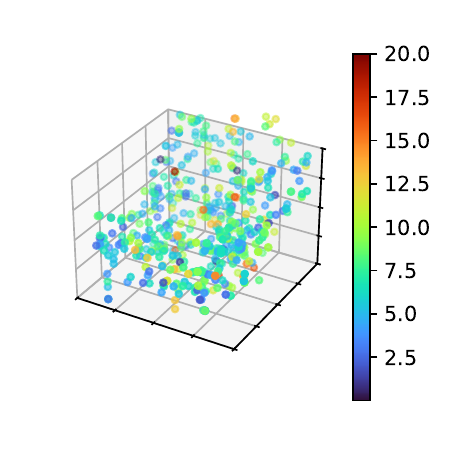}
    \subcaption{GAT (8.84). }
  \end{minipage}
  \begin{minipage}[t]{0.3\linewidth}
    \centering
    \includegraphics[width=3.5cm]{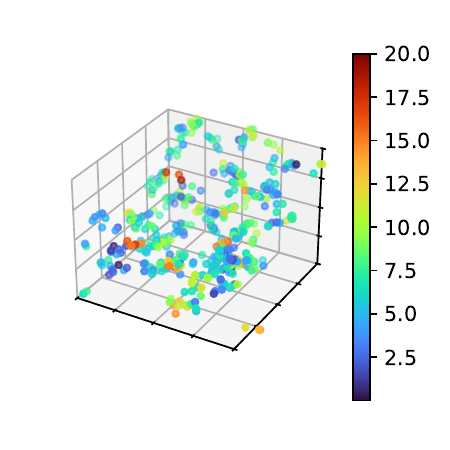}
    \subcaption{UGNN (8.66). }
  \end{minipage}
  \begin{minipage}[t]{0.3\linewidth}
    \centering
    \includegraphics[width=3.5cm]{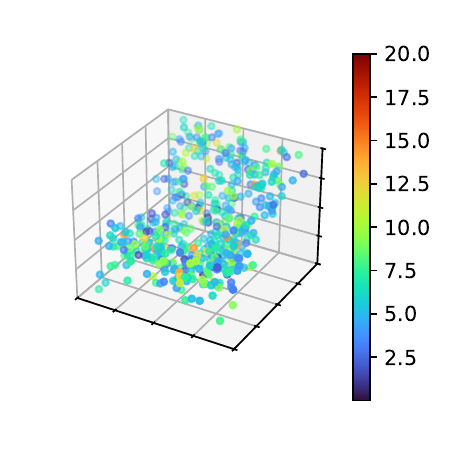}
    \subcaption{RED \\(LR-Unsupervised) (7.67). }
  \end{minipage}
  \begin{minipage}[t]{0.30\linewidth}
    \centering
    \phantom{\includegraphics[width=3.5cm]{img/SIP_logo.pdf}}
  \end{minipage}
  \caption{Visualization results of the proposed and existing methods using real-world dataset ($\sigma=20$). The colors of (b)--(k) show the error $\tilde{\mathbf{x}} - \mathbf{x}$ between the restored signal $\tilde{\mathbf{x}}$ and the original signal $\mathbf{x}$. The values in parentheses indicate the RMSE for this specific sample. Note that these values differ from the average results reported in Table \ref{table:result-real}.}
  \label{fig:result_3dpc_20}
\end{figure*}

\subsection{Computation Time} \label{exp:computation-time}
Table \ref{table:computation-time} shows the average computation time for the model- and RED-based methods \footnote{All experiments were performed on a workstation with an AMD Ryzen 7 9700X, NVIDIA GeForce RTX 3090, and 64 GB of memory.}. 
The proposed methods exhibit longer runtimes compared to their baseline counterparts (LR and PnP) because they involve iterative calls to these denoisers within the RED algorithm. 
The increase is particularly pronounced for RED (PnP), which has a hierarchical structure: PnP iterations are nested within RED iterations, and LR is computed within PnP.
We consider this increased computational cost as a trade-off for achieving higher denoising performance. 
As shown in Tables \ref{table:result-syn} and \ref{table:result-real}, the proposed methods significantly improve the RMSE that justify the additional runtime.

A comparison across datasets also reveals that, the execution time tends to increase for almost all methods as the number of nodes increases. 
This is mainly attributed to the matrix inversion of the graph Laplacian. 
However, this bottleneck can be mitigated by employing iterative algorithms, such as the conjugate gradient method, instead of computing the LR denoiser in a closed form. 

\begin{table}[]
    \centering
    \small
    \caption{Average computation time of each method [ms].}
    \begin{tabular}{l|cc}\bhline{1.1pt}
        Method & Synthetic ($N=100$) & 3-D point cloud ($N=500$) \\ \hline\hline
        LR & 5.23 & 4.97 \\
        PnP & 5.03 & 5.65\\ \hline
        RED (LR) & 6.76 & 19.96\\
        RED (PnP) & 27.01 & 284.46 \\
        RED (LR, DAU) & 7.32 & 18.35\\
        RED (PnP, DAU) & 27.07 & 284.58\\ \bhline{1.1pt}
    \end{tabular}
    \label{table:computation-time}
\end{table}

\section{Conclusion}
\label{sec:conclusion}
In this paper, we propose an interpretable denoising method for graph signals using regularization by denoising (RED). 
We extend the applicability of RED beyond image processing, demonstrating its effectiveness in a wide range of graph signal denoisers. 
Additionally, we introduce both supervised and unsupervised training approaches based on deep algorithm unrolling and Noise2Noise. 
Through a graph filter perspective, we reveal the advantages of RED in the context of graph signal denoising. 
Experimental results on synthetic and real-world data show that our method achieves better performance than existing graph signal denoising techniques, highlighting its potential for future downstream applications.

\printbibliography

\vfill\pagebreak

\end{document}